\documentclass[11pt,letterpaper]{article}

\usepackage[T1]{fontenc}
\usepackage{lmodern}
\usepackage{microtype}
\usepackage[margin=1in]{geometry}
\usepackage{amsmath,amssymb,amsthm,mathtools,bm}
\usepackage{booktabs}
\usepackage{array}
\usepackage{float}
\usepackage{graphicx}
\usepackage{enumitem}
\usepackage{xcolor}
\usepackage{xurl}
\usepackage{hyperref}
\usepackage[nameinlink,noabbrev]{cleveref}

\hypersetup{
  colorlinks=true,
  linkcolor=blue!65!black,
  citecolor=blue!65!black,
  urlcolor=blue!65!black,
  pdftitle={Efficient Exact Quantum Sampling from the Sun--Wootters Distribution for Optimal Polynomial Intersection},
  pdfauthor={Sunghyeon Jo}
}

\newtheorem{theorem}{Theorem}[section]
\newtheorem{lemma}[theorem]{Lemma}
\newtheorem{proposition}[theorem]{Proposition}
\newtheorem{corollary}[theorem]{Corollary}
\theoremstyle{definition}
\newtheorem{definition}[theorem]{Definition}
\newtheorem{remark}[theorem]{Remark}

\newcommand{\F}{\mathbb{F}}
\newcommand{\C}{\mathcal{C}}
\newcommand{\Kcal}{\mathcal{K}}
\newcommand{\Lcal}{\mathcal{L}}

\newcommand{\supp}{\operatorname{supp}}
\newcommand{\wt}{\operatorname{wt}}
\newcommand{\poly}{\operatorname{poly}}
\newcommand{\SCL}{\operatorname{SCL}}
\newcommand{\TV}{\operatorname{TV}}
\newcommand{\E}{\mathbb{E}}
\newcommand{\Prb}{\mathbb{P}}
\newcommand{\ket}[1]{\lvert #1\rangle}

\newcommand{\abs}[1]{\left\lvert #1\right\rvert}
\newcommand{\norm}[1]{\left\lVert #1\right\rVert}
\newcommand{\one}{\mathbf{1}}

\newcommand{\QFT}{\operatorname{QFT}}
\newcommand{\Had}{\operatorname{Had}}

\title{\textbf{Efficient Exact Quantum Sampling from the Sun--Wootters Distribution for Optimal Polynomial Intersection}}
\author{Sunghyeon Jo}
\date{\today}

\begin{document}
\hypersetup{pageanchor=false}
\maketitle

\begin{abstract}
Optimal Polynomial Intersection (OPI) is a structured optimization problem for which Decoded Quantum Interferometry (DQI) attains a satisfaction guarantee governed by the semicircle law. Sun and Wootters recently showed that, for balanced OPI over prime fields, a Fourier-defined distribution $P_u$ gives a strict worst-case improvement from limiting rate $0.6225$ onward and asymptotically perfect solutions from rate $0.7496$ onward. They asked whether $P_u$ can be sampled efficiently.

We answer this question for Reed--Solomon OPI parameters satisfying the Sun--Wootters exponent condition strictly below the dual Johnson radius. Under coherent membership-oracle access, we give a bounded-error polynomial-time quantum sampler for $P_u$. The ideal circuit samples $P_u$ exactly conditioned on success, while a finite-precision implementation achieves any prescribed inverse-polynomial total-variation error. Consequently, every fixed limiting rate $0.6225\le r<1$ admits a strict worst-case improvement over the DQI semicircle value, and every limiting rate $r\ge 3/4$ admits solutions of satisfaction $1-o(1)$ with high probability.

The algorithm coherently sums the amplitudes of all low-weight errors in each syndrome class using deterministic complete list decoding. Complete Reed--Solomon list decoding and the Sun--Wootters denominator estimate make both the list size and the postselection overhead polynomial. In concurrent and independent work, Horinaga and Yamakawa obtain worst-case OPI algorithms over prime-power fields and exact satisfaction at every fixed rate strictly above $3/4$.
\end{abstract}

\thispagestyle{empty}
\clearpage
\tableofcontents
\thispagestyle{empty}
\clearpage
\hypersetup{pageanchor=true}
\pagenumbering{arabic}

\section{Introduction}

Decoded Quantum Interferometry (DQI) combines coherent decoding with a quantum Fourier transform to solve structured optimization problems \cite{JordanEtAl2025}.  Its canonical algebraic example is Optimal Polynomial Intersection (OPI).  An OPI instance specifies distinct points $\alpha_1,\ldots,\alpha_m\in\F_p$ and subsets $S_1,\ldots,S_m\subseteq\F_p$.  The task is to find a polynomial $Q$ of degree less than $n$ for which $Q(\alpha_i)\in S_i$ holds for as many indices as possible.

The original OPI algorithm decodes an error relative to the dual Reed--Solomon code.  It is correct within the unique-decoding radius, and in the balanced case its expected satisfaction ratio follows a semicircle law.  Subsequent work gave nearly linear implementations in random-access models \cite{KhattarEtAl2025,Rosmanis2026}.  The attainable score is therefore governed by the decoding radius.

Sun and Wootters strictly improved the worst-case existential benchmark beyond the semicircle law \cite{SunWootters2026}.  Over prime fields and at balanced density, they prove an improvement from rate $n/m=0.6225$ onward and asymptotically perfect solutions from rate $0.7496$ onward.  Their Question~1.17 asks whether this improvement can be realized by a polynomial-time quantum algorithm.  The discussion following that question proposes a Fourier-defined distribution $P_u$ whose expected score obeys their improved bound, leaving efficient sampling as the algorithmic problem.

We give such a sampler for every target cutoff strictly below the dual Reed--Solomon Johnson radius.  The central operation is a coherent sum over all low-weight errors with the same syndrome.  Unlike choosing one element of a decoded list, this operation preserves the complex phase and cancellation pattern that defines $P_u$.

\subsection{Our Results}

Our first result is a finite, exact circuit identity.  Suppose that a deterministic complete list decoder enumerates every candidate in each radius-$\ell$ syndrome class, also called a syndrome fiber.  Conditioned on a successful list-index measurement, the ideal circuit prepares the normalized Sun--Wootters syndrome state exactly.  The success probability is
\[
   p_{\mathrm{fib}}=\frac{Z_I}{LZ_0},
\]
where $Z_0$ is the squared norm before fiber summation, $Z_I$ is the squared norm after fiber summation, and $L$ is one global padded list length.  Using a single value of $L$ for all syndromes is essential: fiber-dependent normalization would change the target distribution.

The dual evaluation code in OPI is a generalized Reed--Solomon code.  The deterministic list-decoding algorithm of Chatterjee, Harsha, and Kumar \cite{ChatterjeeEtAl2026} therefore makes the construction polynomial time whenever
\begin{equation}
   m-\ell>\sqrt{(m-n-1)m}.
   \label{eq:intro-johnson}
\end{equation}
The Sun--Wootters denominator estimate then gives $Z_I/Z_0=1+o(1)$ under their exponent condition.  Consequently, the flagged outcome occurs with probability $\Theta(1/L)$.

The resulting parameter consequences are summarized below.  The formal sampler theorem is \cref{thm:main-sampler}; the two balanced-case consequences are \cref{cor:rate-06225,cor:perfect-three-quarter}.

\begin{theorem}[Informal; balanced prime-field case]
\label{thm:intro-main}
Consider a sequence of prime-field Reed--Solomon OPI instances with $m\to\infty$, $m\le p$, distinct evaluation points, common density $\rho_m=1/2+o(1)$, and coherent membership-oracle access.
\begin{enumerate}[label=\textup{(\roman*)},leftmargin=2.2em]
  \item If $n/m\to r$ for a fixed $0.6225\le r<1$, there is a uniform polynomial-time quantum algorithm whose expected satisfaction ratio exceeds the DQI semicircle value by a positive constant depending on $r$.
  \item If $n/m\to r\ge3/4$, there is a uniform polynomial-time quantum algorithm that outputs a polynomial of satisfaction ratio $1-o(1)$ with probability $1-o(1)$.
\end{enumerate}
Whenever the target distribution is well defined and a complete list is available, the ideal successful branch samples the Sun--Wootters distribution defined by that radius exactly.  A finite-precision implementation samples it to any prescribed inverse-polynomial total-variation error.
\end{theorem}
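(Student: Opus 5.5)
The proof assembles three ingredients: the exact fiber-summation circuit identity, polynomial-time complete list decoding of the dual generalized Reed--Solomon code, and the Sun--Wootters guarantees on the expected score of $P_u$. The work is to check that, for the two rate regimes, some decoding radius $\ell$ satisfies the Sun--Wootters exponent condition and the Johnson-type inequality \eqref{eq:intro-johnson} at once.

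\textbf{Step 1 (ideal circuit).} Prepare the pre-DQI state: a weighted superposition over low-weight error vectors $e$ of weight at most $\ell$, with the coordinate amplitudes that define $P_u$ (built from the coherent membership oracles for the $S_i$). Compute the syndrome $H e$ into a register. Using the deterministic complete list decoder of Chatterjee--Harsha--Kumar \cite{ChatterjeeEtAl2026}, map $\ket{e}\ket{s}$ reversibly to $\ket{j}\ket{s}$, where $j$ is the index of $e$ in the canonically ordered list of the fiber of $s$. Pad every list to one global length $L$ with dummy indices, chosen polynomial from the list-size bound of that decoder. Apply a Fourier/Hadamard-type transform on the index register and postselect on the all-zero index. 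This leaves amplitude proportional to $\sum_{e\in\text{fiber}(s)}\psi(e)$ on $\ket{s}$, which is exactly the Sun--Wootters syndrome state. Its success probability is $Z_I/(LZ_0)$; using the same $L$ for every syndrome is what keeps the conditional distribution exact. Finally apply the QFT and read off the polynomial coefficients, as in standard DQI. The output is then distributed as $P_u$.

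\textbf{Step 2 (efficiency).} Because of \eqref{eq:intro-johnson}, the list decoder and the list sizes are polynomial. The Sun--Wootters denominator estimate gives $Z_I/Z_0=1+o(1)$ under their exponent condition. Hence $p_{\mathrm{fib}}=\Theta(1/L)$, and $O(L)$ repetitions, or amplitude amplification with $O(\sqrt{L})$ rounds, succeed with constant probability. State preparation uses amplitudes given by a product or Dicke-type structure; with finite precision these amplitudes and the list decoder's reversible arithmetic incur error, which is controlled to inverse-polynomial total variation by standard bounds.

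\textbf{Step 3 (parameters).} For (i), take $n/m\to r\in[0.6225,1)$ and $\rho=1/2$. I would choose $\ell=\lfloor\lambda m\rfloor$ with $\lambda$ strictly below $1-\sqrt{1-r}$, the relative dual Johnson radius, using that $(m-n-1)m/m^2\to 1-r$. Then invoke the Sun--Wootters theorem that the expected score of $P_u$ exceeds the semicircle value by a constant for such $\lambda$ in their admissible range. For (ii), take $r\ge 3/4$, pick $\lambda$ in the asymptotically perfect regime, and use concentration, Markov's inequality on the unsatisfied fraction, to get satisfaction $1-o(1)$ with probability $1-o(1)$.

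\textbf{Main obstacle.} The delicate point is verifying that the Sun--Wootters exponent condition and their score improvement already hold at radii strictly below $1-\sqrt{1-r}$, rather than at their optimal radius, which may exceed the Johnson radius. The threshold $0.6225$ and the improvement of $0.7496$ to $3/4$ suggest exactly this trade-off. I would first recompute their exponent inequality as a function of $(\lambda,r)$ and check its region numerically and then analytically near the boundary $\lambda\uparrow 1-\sqrt{1-r}$. I would also confirm monotonicity in $r$, so that the endpoint cases $r=0.6225$ and $r=3/4$ follow from continuity with strict slack.
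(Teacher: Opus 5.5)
Your Steps~1 and~2 match the paper's circuit: one global padding length $L$, Hadamard projection of the index register, $p_{\mathrm{fib}}=Z_I/(LZ_0)$, and $Z_I/Z_0=1+o(1)$ from the denominator estimate. One correction: $P_u$ lives on the thin shell $\Kcal$, not on all weights up to $\ell$. The gap is in Step~3, which carries the theorem's content.

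For (i), you never show that the Sun--Wootters feasible radii meet the Johnson window $(\mu,1-\sqrt{1-r})$, and you propose to search the wrong end of that window. The missing lemma is that $\delta\mapsto E(\mu,\delta)$ is strictly increasing. Indeed $\partial_\delta E=\log\frac{(\mu+\delta)(1-2\mu-\delta)}{\delta(1-\mu-\delta)}$, and numerator minus denominator equals $\mu(1-2\mu-2\delta)>0$. Since $G$ does not depend on $\delta$, any feasible $\delta_0$ can be shrunk to a rational $\delta<\min\{\delta_0,\,1-\sqrt{1-r}-\mu\}$. The second entry equals $(1-\sqrt{1-r})^2/2>0$, and strict monotonicity of $\SCL_{1/2}$ turns every such $\delta>0$ into a constant gain. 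Near $r=0.6225$, the certified feasible point has $\delta=10^{-6}$ with a margin of order $10^{-4}$, while the Johnson gap is about $0.074$. Because $E$ grows roughly like $\delta\log(1/\delta)$, the condition is lost long before the Johnson boundary, so the radii you propose to examine are infeasible. Continuity also cannot produce a strict inequality at the endpoint $r=0.6225$. You need either Sun--Wootters' feasibility on all of $[0.6225,1)$, or the paper's argument: $\Phi(\mu)=E(\mu,0)+\min_\lambda G(\mu,\lambda)$ is strictly decreasing (via $\Phi'=\log R$ with a uniform rational bound $R<1$), and $\Phi(249/800)<0$ is certified by rigorous rational-interval evaluation.

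For (ii), a fixed radius strictly below $1-\sqrt{1-r}$ cannot give satisfaction $1-o(1)$ at $r=3/4$, which the statement includes. There the Johnson radius is exactly $1/2$, and $\SCL_{1/2}(t)<1$ for every fixed $t<1/2$, so you would lose a constant. The paper instead uses the $m$-dependent cutoff $\ell_m=\min\{\lfloor m/2\rfloor,\lceil m-\sqrt{(m-n'-1)m}\rceil-1\}$. This satisfies the finite decoder inequality $m-\ell_m>\sqrt{(m-n'-1)m}$ at every length while $\ell_m/m\to1/2$. It is combined with a stability lemma that carries the Sun--Wootters estimates along convergent parameter sequences, given a strict certified margin at the limit point $(\mu,\delta)=(3/8,1/8)$ with auxiliary parameter $\lambda=99/250$.

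The paper also covers every $r>3/4$ by restricting to polynomials of degree less than $n'=\min\{n,\lfloor3m/4\rfloor\}$, which remain valid outputs, so only this single limit point needs a certificate. Your unspecified ``asymptotically perfect regime'' gives no argument for rates above $3/4$, unless you import a Sun--Wootters statement valid uniformly for every $r\in(3/4,1]$. That includes the degenerate limit $r=1$, where the entropy domain of $G$ collapses.
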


The first item gives an efficient sampler for the specific distribution proposed by Sun and Wootters and begins at balanced rate $0.6225$; the concurrent algorithm discussed below guarantees a strict improvement for every fixed rate above $0.6979$.  The second item includes the limiting Johnson point $r=3/4$ and gives asymptotically perfect satisfaction there.

\Cref{fig:balanced-bounds} compares the balanced-case algorithmic bounds.  Below rate $3/4$, the curve for this paper is the supremum of \cref{eq:scl} over parameters satisfying the Sun--Wootters condition \cref{eq:SW-condition} and the Johnson-radius bound.  The Horinaga--Yamakawa curve is the boundary of the two conditions in their informal balanced-case theorem \cite{HorinagaYamakawa2026}.  Fixed points below either boundary satisfy the required strict inequalities.

\begin{figure}[H]
\centering
\includegraphics[width=0.94\linewidth]{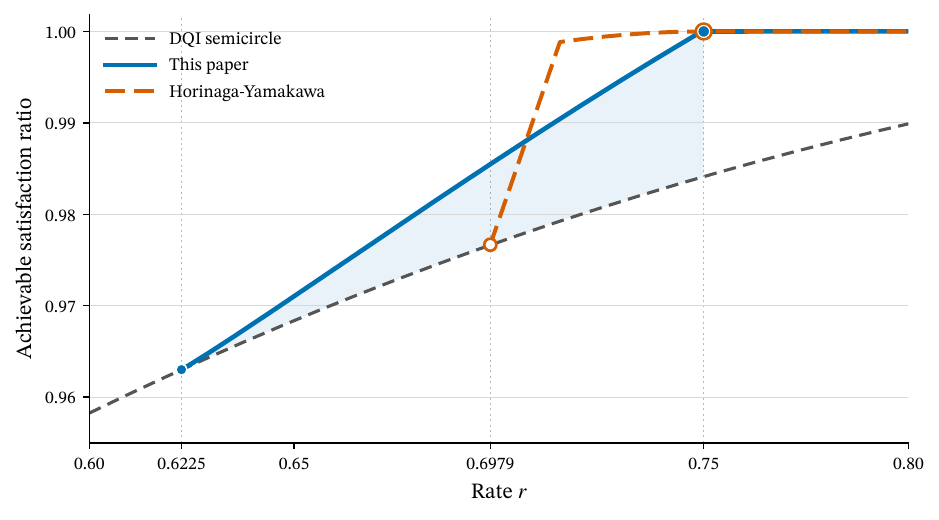}
\caption{Asymptotic algorithmic satisfaction bounds in the balanced case.  The curves show supremal boundary values.  Blue markers identify the included rates $0.6225$ and $3/4$; orange rings identify the reported sufficient rates $0.6979$ and $3/4$ for Horinaga--Yamakawa, whose guarantees apply strictly to their right.  At rates at least $3/4$, this paper gives satisfaction $1-o(1)$ with high probability, while Horinaga--Yamakawa give satisfaction $1$ at every fixed rate strictly above $3/4$.}
\label{fig:balanced-bounds}
\end{figure}

\subsection{Technical Overview}

\paragraph{The Fourier-side state.}
The DQI state preparation produces a superposition of low-weight errors and their syndromes of the form
\begin{equation}
   \frac1{\sqrt{Z_0}}
   \sum_y a_y\ket{y}\ket{Hy}.
   \label{eq:intro-labeled-state}
\end{equation}
Here $H=B^{\mathsf T}$ is a parity-check map for the dual evaluation code, and the complex coefficient $a_y$ is a product of local Fourier coefficients.  Within the unique-decoding radius, the syndrome determines the error and the error register can be erased coherently.  At larger radii, several errors may have the same syndrome.  The Sun--Wootters state requires the coherent sum of their weighted amplitudes; list cardinality alone contains no phase information.

\paragraph{Coherent fiber summation.}
We run a complete deterministic list decoder, sort its output canonically, and pad every list to the same power-of-two length $L$.  A reversible indexer maps each supported pair $(y,s)$ to the canonical index of $y$ in the list for syndrome $s$.  A match flag distinguishes a genuine element in slot zero from an invalid input.  Projecting the index register onto the uniform state then implements
\begin{equation}
   \sum_{y:\,Hy=s}a_y\ket{y}
   \longmapsto
   \left(\sum_{y:\,Hy=s}a_y\right)\ket{s}
   \label{eq:intro-fiber-sum}
\end{equation}
with the same multiplicative factor $L^{-1/2}$ for every syndrome.  Thus all phases and cancellations are retained exactly.  After the inverse Fourier transform, the resulting measurement distribution is precisely $P_u$ in the ideal circuit.

\paragraph{Polynomial complexity.}
Inside the Johnson radius, an elementary pair-counting argument bounds every decoded list by $m^2$; with fixed normalized slack, the list size is constant.  The deterministic classical decoder can be padded to a fixed running time, compiled reversibly, and uncomputed.  The denominator estimate of Sun and Wootters bounds the change in squared norm from interference by $e^{-\Omega(m)}$.  Repeating the circuit $O(L\log(1/\beta))$ times therefore gives success probability at least $1-\beta$.  The finite-precision analysis separately treats state preparation, conditioning, and the final prime-field Fourier transform.

\paragraph{From sampling to optimization.}
Sun and Wootters analyze a radius $\mu+\delta$ beyond the DQI cutoff $\mu=n/(2m)$.  Whenever their exponent condition is strict, decreasing $\delta$ preserves it.  A strict decrease also places the radius inside Johnson.  This monotonicity transfers every balanced strict-improvement point to an efficiently sampleable point with a still-positive score gap.  At limiting rate $3/4$, a finite-length cutoff approaches radius $1/2$ from inside Johnson and yields the second item of \cref{thm:intro-main}.

\subsection{Related and Concurrent Work}
\label{sec:related-work}

\paragraph{Fourier decoding algorithms.}
The use of coherent decoding in the Fourier domain originates in Regev's quantum reduction and was developed further in the quantum-decoding and verifiable-advantage literature \cite{Regev2009,YamakawaZhandry2024,ChaillouxTillich2024}.  Chailloux and Tillich introduced soft-decoder methods that tolerate inverse-polynomial decoding success, and subsequent work applied this perspective to improved average-case OPI algorithms \cite{ChaillouxTillich2025,Chailloux2025}.  DQI gives the worst-case OPI algorithm at the semicircle bound \cite{JordanEtAl2025}; Khattar et al. and Rosmanis study its efficient implementation \cite{KhattarEtAl2025,Rosmanis2026}.

\paragraph{Sun--Wootters.}
Sun and Wootters provide the existential, denominator, and performance estimates used here \cite{SunWootters2026}.  The present circuit realizes their phase-sensitive syndrome sums in polynomial time for cutoffs below the dual Reed--Solomon Johnson radius.

\paragraph{Concurrent work.}
Concurrent and independent work of Horinaga and Yamakawa gives a different worst-case quantum algorithm beyond DQI \cite{HorinagaYamakawa2026}.  Their Regev-like construction uses an approximately uniform member of a decoded list.  In the balanced case $\rho=1/2$, they show $R_0^{\mathrm{alg}}(1/2)<0.6979$, and hence guarantee an improvement for every fixed rate above $0.6979$; they obtain satisfaction $1$ at every fixed rate above $3/4$.  Under the assumptions $q=p^e$, $p=\omega(1)$, $\log q\le\poly(m)$, and the constant-margin hypotheses of their theorem, their results also cover general density $\rho\in(0,1)$ and MDS MaxLINSAT instances whose dual code admits an efficient list decoder.  The present algorithm samples the Sun--Wootters distribution over prime fields, reaches the balanced threshold $0.6225$, and includes the limiting point $3/4$.

\paragraph{Parameter range.}
The theorem gives polynomial complexity using a reversible compilation of the deterministic list decoder.  It applies to prime-field Reed--Solomon OPI with $\rho=1/2+o(1)$ and coherent membership-oracle access.  Subject to their asymptotic field and constant-margin assumptions, Horinaga and Yamakawa cover prime-power fields, general $\rho\in(0,1)$, and MDS MaxLINSAT when the dual code admits an efficient list decoder.  Exact Sun--Wootters sampling beyond the Johnson radius calls for a coherent weighted-sum algorithm whose complexity depends on amplitudes more directly than on maximum list size.

\section{Preliminaries}

\subsection{Fourier Analysis over a Prime Field}

Let $p$ be prime and let $\omega=e^{2\pi i/p}$.  For $f:\F_p\to\mathbb C$, define
\begin{equation}
   \widehat f(z):=\frac1p\sum_{a\in\F_p} f(a)\omega^{-az},
   \qquad
   f(a)=\sum_{z\in\F_p}\widehat f(z)\omega^{az}.
   \label{eq:fourier-convention}
\end{equation}
Parseval's identity is
\begin{equation}
   \sum_{z\in\F_p}\abs{\widehat f(z)}^2
   =\frac1p\sum_{a\in\F_p}\abs{f(a)}^2.
   \label{eq:parseval-one-dimensional}
\end{equation}
For $s,x\in\F_p^n$, write $\chi_s(x)=\omega^{\langle s,x\rangle}$.  Expectations over a finite set are uniform unless a distribution is displayed explicitly.

\subsection{Balanced OPI}

Let $\alpha_1,\ldots,\alpha_m\in\F_p$ be distinct, with $m\le p$, and let
\begin{equation}
   B=\begin{pmatrix}
      1&\alpha_1&\cdots&\alpha_1^{n-1}\\
      \vdots&\vdots&&\vdots\\
      1&\alpha_m&\cdots&\alpha_m^{n-1}
   \end{pmatrix}\in\F_p^{m\times n}.
   \label{eq:vandermonde}
\end{equation}
For $x\in\F_p^n$, the vector $Bx$ consists of evaluations of the polynomial with coefficient vector $x$.  Suppose that all input sets have a common density
\[
   \abs{S_i}=\rho p,
   \qquad 0<\rho<1.
\]
Define the satisfaction ratio
\begin{equation}
   s(x):=\frac1m\sum_{i=1}^m\one\{\langle b_i,x\rangle\in S_i\},
   \label{eq:score}
\end{equation}
where $b_i$ is row $i$ of $B$, and define the normalized discrepancy function
\begin{equation}
   g_i(a):=\frac{\one\{a\in S_i\}-\rho}{\sqrt{\rho(1-\rho)}}.
   \label{eq:gi}
\end{equation}
Then
\begin{equation}
   \frac1p\sum_a g_i(a)=0,
   \qquad
   \frac1p\sum_a g_i(a)^2=1,
   \qquad
   \widehat g_i(0)=0,
   \qquad
   \sum_{z\ne0}\abs{\widehat g_i(z)}^2=1.
   \label{eq:gi-properties}
\end{equation}
For $0\le k\le m$, set
\begin{equation}
   q_k(x):=\sum_{\substack{T\subseteq[m]\\\abs{T}=k}}
              \prod_{i\in T}g_i(\langle b_i,x\rangle).
   \label{eq:qk}
\end{equation}

We study asymptotic families with $m\to\infty$, $1\le n<m\le p$, $n/m\to r=2\mu\in(0,1]$, prime $p=p(m)$, and $\rho=\rho_m=1/2+o(1)$.  All asymptotic constants may depend on the fixed limiting parameters but not on the particular input sets.  For probability distributions $P,Q$ on the same finite set, we use
\[
   \TV(P,Q):=\frac12\sum_x\abs{P(x)-Q(x)}.
\]

\paragraph{Computational model.}
The integers $p,m,n$, the distinct evaluation points, and the common cardinality $\rho p$ are classical input data.  The sets are accessed by the coherent membership oracle in \cref{eq:membership-oracle}.  A deterministic classical preprocessor may use the public input data and the requested accuracy to compute field constants, the right inverse $R$, the multipliers $\nu_i$, and a description of the quantum circuit; its polynomial running time is included in the total complexity.  The resulting oracle circuit is uniform in the standard sense that this description is produced in time polynomial in the classical input length.

Put $b=\lceil\log_2p\rceil$ and identify $\F_p$ with the code subspace $\mathcal H_p=\operatorname{span}\{\ket0,\ldots,\ket{p-1}\}\subseteq(\mathbb C^2)^{\otimes b}$.  Every reversible field operation is specified as a permutation on all $2^b$ binary strings; encodings $p,\ldots,2^b-1$ have fixed total behavior and are never populated by the ideal preparation.  Likewise, an approximate prime-modulus QFT is a qubit unitary whose restriction to $\mathcal H_p$ is within the stated operator-norm error of the mathematical $p$-dimensional QFT, with an arbitrary unitary extension on $\mathcal H_p^\perp$.  Running time counts one- and two-qubit gates, reversible bit operations, and oracle calls.  Arbitrary rotations and the prime-modulus QFT are synthesized over a fixed universal gate set to the precision specified in \cref{cor:finite-precision}.

\subsection{The Sun--Wootters Target Distribution}

Fix $\delta\in[0,1/2-\mu]$ and define
\begin{equation}
   \ell=\lfloor(\mu+\delta)m\rfloor,
   \qquad
   \sigma=\lfloor\log\log\ell\rfloor,
   \qquad
   \Kcal=\{\ell-\sigma,\ldots,\ell\}\cap\{0,\ldots,m\}.
   \label{eq:shell}
\end{equation}
Choose
\begin{equation}
   u_k:=\binom{m}{k}^{-1/2}\one\{k\in\Kcal\}.
   \label{eq:weights}
\end{equation}
The unnormalized target amplitude and its probability distribution are
\begin{equation}
   F(x):=\sum_{k\in\Kcal}u_kq_k(x),
   \qquad
   P_u(x):=\frac{\abs{F(x)}^2}{\sum_{z\in\F_p^n}\abs{F(z)}^2}.
   \label{eq:Pu}
\end{equation}
Because every $g_i$ is real, $F(x)$ is real, so this agrees with the square used by Sun and Wootters.

For $0\le t\le1/2$, the balanced semicircle function is
\begin{equation}
   \SCL_{1/2}(t)=\frac12+\sqrt{t(1-t)}.
   \label{eq:scl}
\end{equation}

We use the following consequence of the proof of Sun and Wootters's Theorem~1.6, specifically their denominator expansion, Lemma~4.11, and Lemma~5.2 \cite{SunWootters2026}.

\begin{theorem}[Sun--Wootters analytic estimate]
\label{thm:SW-import}
Let $r=2\mu$, let $\rho_m=1/2+o(1)$, and use \cref{eq:shell,eq:weights,eq:Pu}.  Define the binary entropy with natural logarithms by
\[
   h(t)=-t\log t-(1-t)\log(1-t),
\]
and
\begin{align}
   E(\mu,\delta)
   &:=(\mu+\delta)h\!\left(\frac{\mu}{\mu+\delta}\right)
      +(1-\mu-\delta)h\!\left(\frac{\mu}{1-\mu-\delta}\right)
      -h(2\mu),
      \label{eq:E-exponent}\\
   G(\mu,\lambda)
   &:=(1-2\mu)\log2+h(2\mu)
      -(4\mu-1)h\!\left(\frac{\lambda}{4\mu-1}\right)\notag\\
   &\hspace{2.8em}
      -(2-4\mu)h\!\left(\frac{2\mu-\lambda}{2-4\mu}\right)
      +\lambda\log\!\left(\frac2\pi\right),
      \label{eq:G-exponent}
\end{align}
where parameters are restricted to the domain on which the entropy arguments lie in $[0,1]$.
If there is a $\lambda$ such that
\begin{equation}
   E(\mu,\delta)+G(\mu,\lambda)<0,
   \label{eq:SW-condition}
\end{equation}
then there exist constants $c>0$ and $C<\infty$ such that, uniformly over all balanced input sets,
\begin{align}
   \E_{x\in\F_p^n}\abs{F(x)}^2
      &=\abs{\Kcal}+O(m^Ce^{-cm}),
      \label{eq:SW-denominator}\\
   \E_{x\sim P_u}s(x)
      &\ge \SCL_{1/2}(\mu+\delta)-o(1).
      \label{eq:SW-score}
\end{align}
\end{theorem}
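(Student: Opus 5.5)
This statement is not proved from scratch here: it is imported from the proof of Sun and Wootters's Theorem~1.6. The plan is therefore to extract the two displayed conclusions from their argument and to check that our conventions match theirs. These conventions are the Fourier normalization \cref{eq:fourier-convention}, the discrepancy functions \cref{eq:gi}, the shell $\Kcal$ of \cref{eq:shell}, and the weights \cref{eq:weights}.

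\textbf{Denominator.} Expand $\abs{F(x)}^2=\sum_{k,k'\in\Kcal}u_ku_{k'}q_k(x)q_{k'}(x)$ and average over $x\in\F_p^n$. Writing each $g_i$ through its Fourier series, each product $q_kq_{k'}$ becomes a sum over pairs of supports $T,T'$ with nonzero frequency labels. A term survives the average exactly when the combined frequency vector $y-y'$ lies in the dual code $\ker B^{\mathsf T}$.

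The diagonal contributions ($y=y'$) give, by Parseval and \cref{eq:gi-properties}, exactly $\binom{m}{k}$ for each $k$. With the weights $u_k$ this yields the main term $\abs{\Kcal}$.

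The off-diagonal contributions come from nonzero dual codewords $c=y-y'$, with $\wt(c)\ge m-n+1$ by the MDS property. Bounding them is precisely the Sun--Wootters denominator expansion together with their Lemma~4.11. That argument proceeds in three steps:
\begin{itemize}
  \item count dual codewords of each weight using the MDS weight distribution;
  \item bound the overlap sums of products of $\widehat g_i$ using $\abs{\widehat g_i}$ estimates for balanced sets, which is where the factor $(2/\pi)^\lambda$ enters;
  \item optimize over the overlap parameter $\lambda$.
\end{itemize}
This produces an exponent $E(\mu,\delta)+G(\mu,\lambda)$. The strict inequality \cref{eq:SW-condition} then gives an $e^{-cm}$ bound, up to the polynomial factors coming from $\abs{\Kcal}^2$ and the $\sigma$-shell, which gives \cref{eq:SW-denominator}.

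\textbf{Score.} Write $s(x)=\rho+\frac{\sqrt{\rho(1-\rho)}}{m}\sum_i g_i(\langle b_i,x\rangle)$. The numerator $\E_x\abs{F}^2 s$ then involves $\E_x q_kq_{k'}\sum_i g_i$, which is the "neighbouring-level" cross term. Following their Lemma~5.2, this term is a main part of size $\sum_{k}u_ku_{k+1}\sqrt{\cdots}$ with combinatorial coefficients $\approx 2\sqrt{k(m-k)}\binom{m}{k}^{1/2}\binom{m}{k+1}^{1/2}$, plus off-diagonal parts controlled by the same exponent argument. Because $\sigma\to\infty$ while $\sigma/m\to0$, dividing by the denominator \cref{eq:SW-denominator} makes the boundary loss of the shell $o(1)$. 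With $\rho_m\to1/2$, the main part tends to $\SCL_{1/2}(\mu+\delta)$, which gives \cref{eq:SW-score}.

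\textbf{Main obstacle.} The key step is verifying that their bounds hold \emph{uniformly} over balanced input sets with $\rho_m=1/2+o(1)$ rather than exactly $1/2$, and that the constants depend only on $(\mu,\delta,\lambda)$. I would handle this by rereading their Fourier-coefficient bounds and observing that they use only the properties in \cref{eq:gi-properties} and $\abs{S_i}$-based $\ell^\infty$ bounds, which are continuous in $\rho$.
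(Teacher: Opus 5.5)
Your proposal follows the same route as the paper. Both treat the statement as an import from the proof of Sun--Wootters's Theorem~1.6: the zero-codeword (diagonal) part of $\E_x\abs{F(x)}^2$ is exactly $\abs{\Kcal}$, the nonzero-codeword corrections and the score bound come from their estimates, and the strict exponent margin absorbs the $\rho_m=1/2+o(1)$ and rounding perturbations uniformly over input sets.

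Two small corrections. First, the paper attributes the off-diagonal denominator bound to their Lemma~5.2 together with Lemma~4.10, and the $P_u$-expectation bound to their Lemma~4.11, which is the reverse of your attribution. Second, nonzero codewords of $\C^\perp=\ker B^{\mathsf T}$ have weight at least $n+1$ by \cref{lem:dual-GRS}; $m-n+1$ is the distance of $\C$. Since you import the bound rather than rederive it, this slip does not affect your argument.
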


\begin{proof}[Derivation from the cited estimates]
Sun and Wootters define the same discrepancy functions and the same distribution in their equations~(4.2), (4.17), and (4.18); the proportionality constant in their $u_k$ is immaterial, and we fix it as in \cref{eq:weights}.  Their Proposition~4.7 evaluates the weight-zero part of the denominator as exactly $\sigma+1=\abs{\Kcal}$.  Under \cref{eq:SW-condition}, their Lemma~5.2, combined with Lemma~4.10 and the proof of Theorem~1.6, bounds every nonzero-weight correction by an exponentially decaying term.  Summing the polynomially many weight and shell-index choices gives the additive $O(m^Ce^{-cm})$ term in \cref{eq:SW-denominator}; this is the denominator displayed in their equation~(4.35).  Their Lemma~4.11 then gives the expectation bound under the distribution $P_u$.  The strict exponent margin absorbs the $o(m)$ terms caused by $\rho_m=1/2+o(1)$ and by integer rounding.  All Fourier estimates in that argument are worst-case bounds over the sets $S_i$, so the constants are uniform over the promised inputs.
\end{proof}

\begin{lemma}[Stability under sublinear parameter perturbations]
\label{lem:SW-stability}
The conclusions of \cref{thm:SW-import} remain valid if $n/(2m)=\mu+o(1)$ and the largest active weight satisfies $\ell/m=\mu+\delta+o(1)$, provided the limiting parameters obey the strict inequality \cref{eq:SW-condition}.  The same is true for any fixed $\lambda$ in the interior of its entropy domain.
\end{lemma}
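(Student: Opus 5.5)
The argument is a continuity argument on the exponents. The Sun--Wootters bounds in \cref{thm:SW-import} are proved from explicit exponential-rate estimates, and these rates depend continuously on the normalized parameters. A strict inequality at the limit therefore leaves room for any $o(1)$ drift in $n/(2m)$ and $\ell/m$.

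First, I fix the limiting pair $(\mu,\delta)$ and a $\lambda$ with $E(\mu,\delta)+G(\mu,\lambda)=-2\eta<0$. If $\lambda$ is an interior point of its entropy domain, the domain constraints are strict inequalities. If the original witness $\lambda$ sits on the boundary, I first move it slightly into the interior. This is possible because $G$ is continuous on the closed domain: $h$ is continuous on $[0,1]$, and the remaining terms are affine.

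Next, I put $\mu_m=n/(2m)$ and $\delta_m=\ell/m-\mu_m$, so that $\mu_m\to\mu$ and $\delta_m\to\delta$. Because $E$ and $G$ are finite sums of continuous functions of their arguments, $E(\mu_m,\delta_m)+G(\mu_m,\lambda)\le-\eta$ for all large $m$. The same holds for any finite-length replacement of binomial coefficients: by Stirling, $\binom{a m}{b m}=e^{m a h(b/a)+O(\log m)}$ uniformly on compact sets of $(a,b)$.

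The main step is to re-run the derivation of \cref{thm:SW-import} with $m$-dependent parameters. In the proof of Sun and Wootters, each nonzero-weight correction term in the denominator expansion (their Lemmas~4.10 and~5.2) is bounded by a product of binomial coefficients and Fourier-mass factors. The exponential rate of that product is $m(E+G)$, evaluated at the actual integer parameters, plus $o(m)$ errors. These errors come from Stirling, from rounding, and from $\rho_m=1/2+o(1)$, and they are already absorbed in the original derivation. Only the leading exponent changes, and with the margin $\eta$ each term is still $e^{-\eta m+o(m)}$. Summing over the polynomially many weights and shell indices gives \cref{eq:SW-denominator} with $c=\eta/2$.

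The score bound \cref{eq:SW-score} follows from Lemma~4.11 applied at the actual parameters, which yields $\SCL_{1/2}(\ell/m)-o(1)$. Since $\SCL_{1/2}$ is continuous on $[0,1/2]$, this is $\SCL_{1/2}(\mu+\delta)-o(1)$.

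I expect the main obstacle to be checking that the cited estimates are genuinely uniform in the parameters near the limit. Specifically, the implicit constants and $o(m)$ terms in Sun--Wootters must be locally uniform in $(\mu,\delta,\lambda)$, not proved only at fixed values. I would address this by noting that every such error term comes from Stirling approximations and from the worst-case Fourier bound $\sum_{z\ne0}|\widehat g_i(z)|^2=1$. Neither depends on the parameters beyond the compact neighborhood, so uniformity holds. The one point needing care is $\mu$ near a domain boundary, such as $4\mu-1\to0$, where a tighter entropy argument could degenerate. The assumption that $\lambda$ is interior excludes this, and that is exactly what the lemma's final sentence records.
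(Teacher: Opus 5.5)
Your proposal is correct and follows essentially the same route as the paper. The paper's proof likewise notes that each binomial and balanced-Fourier factor contributes its continuous exponent plus $o(m)$, so convergent parameters shift $E+G$ by only $o(1)$, and the strict limiting margin survives the same polynomial summations.

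Your extra steps make explicit details the paper leaves implicit. Nudging a boundary $\lambda$ into the interior keeps $(\mu_m,\lambda)$ in the domain, and continuity of the semicircle function lets you pass from $\ell/m$ to $\mu+\delta$ in the score bound. One small correction: the $\lambda\log(2/\pi)$ term rests on the balanced sup-norm bound for $\widehat g_i$, not on Parseval alone. Its $o(1)$ error is equally independent of the parameters, so your uniformity claim still holds.
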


\begin{proof}
In the derivation above, every binomial coefficient contributes its entropy exponent plus $o(m)$, the balanced Fourier estimate contributes $G(\mu,\lambda)+o(1)$ per coordinate, and the shell width is $o(m)$.  Replacing $\mu$ and $\ell/m-\mu$ by convergent sequences changes the continuous entropy exponents by $o(1)$.  A strict negative limiting margin therefore remains negative by a fixed constant for all sufficiently large $m$, and the same summations yield \cref{eq:SW-denominator,eq:SW-score}.
\end{proof}

\begin{remark}
Theorem~1.6 of Sun and Wootters is stated as an existence theorem.  The stronger distributional conclusion in \cref{eq:SW-score} is what their proof establishes before taking a maximizing point: their Lemma~4.11 directly lower-bounds expectation under $P_u$.  Equation \eqref{eq:SW-denominator} is the denominator estimate appearing in their equation~(4.35), after the common normalization of the $u_k$ is fixed as in \cref{eq:weights}.
\end{remark}

\section{The Fourier-Domain Target State}

Let
\begin{equation}
   H:=B^{\mathsf T}\in\F_p^{n\times m},
   \qquad \C:=\operatorname{im}(B),
   \qquad
   \C^\perp:=\ker H.
   \label{eq:H-code}
\end{equation}
For $y\in\F_p^m$, define
\begin{equation}
   a_y:=u_{\wt(y)}\prod_{i:y_i\ne0}\widehat g_i(y_i),
   \label{eq:ay}
\end{equation}
where $u_k=0$ outside $\Kcal$.  Define
\begin{equation}
   Z_0:=\sum_y\abs{a_y}^2,
   \qquad
   \ket{\Psi_0}:=\frac1{\sqrt{Z_0}}\sum_y a_y\ket{y}\ket{Hy}.
   \label{eq:labeled-state}
\end{equation}
For each syndrome $s\in\F_p^n$, let
\begin{equation}
   \Lcal_s:=\{y\in\F_p^m:Hy=s,\ \wt(y)\in\Kcal\},
   \qquad
   A_s:=\sum_{y\in\Lcal_s}a_y,
   \label{eq:fibers}
\end{equation}
and set
\begin{equation}
   Z_I:=\sum_s\abs{A_s}^2,
   \qquad
   \ket{\widehat I}:=\frac1{\sqrt{Z_I}}\sum_sA_s\ket{s}.
   \label{eq:ideal-syndrome}
\end{equation}

\begin{lemma}[Fourier expansion]
\label{lem:fourier-expansion}
For every $x\in\F_p^n$,
\begin{equation}
   F(x)=\sum_{y\in\F_p^m}a_y\chi_{Hy}(x)
       =\sum_{s\in\F_p^n}A_s\chi_s(x).
   \label{eq:fourier-expansion}
\end{equation}
Consequently,
\begin{equation}
   Z_I=\E_{x\in\F_p^n}\abs{F(x)}^2,
   \label{eq:ZI-parseval}
\end{equation}
and applying the inverse QFT over $\F_p^n$ to \cref{eq:ideal-syndrome} produces a state whose measurement distribution is exactly $P_u$.
\end{lemma}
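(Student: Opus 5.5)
The plan is to expand $F$ coordinatewise using the Fourier inversion formula \cref{eq:fourier-convention}. For each $i$ and each $x\in\F_p^n$, write $g_i(\langle b_i,x\rangle)=\sum_{z\in\F_p}\widehat g_i(z)\omega^{z\langle b_i,x\rangle}$. Since $\widehat g_i(0)=0$ by \cref{eq:gi-properties}, only $z\ne0$ contributes. Substituting into \cref{eq:qk} gives
\[
  \prod_{i\in T}g_i(\langle b_i,x\rangle)=\sum_{y:\,\supp(y)=T}\Bigl(\prod_{i\in T}\widehat g_i(y_i)\Bigr)\omega^{\sum_i y_i\langle b_i,x\rangle}.
\]
The exponent is $\langle \sum_i y_i b_i,x\rangle=\langle B^{\mathsf T}y,x\rangle=\langle Hy,x\rangle$, so the character is $\chi_{Hy}(x)$. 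Summing over $|T|=k$ and then over $k\in\Kcal$ with weights $u_k$ turns the sum over supports into a sum over all $y$ with $\wt(y)=k$. The coefficient is exactly $a_y$ from \cref{eq:ay}, and vectors of weight outside $\Kcal$ get coefficient zero. This proves the first equality in \cref{eq:fourier-expansion}. Grouping the $y$ by syndrome $s=Hy$, and using that $a_y=0$ unless $\wt(y)\in\Kcal$, gives $\sum_s A_s\chi_s(x)$ with $A_s$ as in \cref{eq:fibers}.

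For \cref{eq:ZI-parseval}, I would use orthogonality of characters on $\F_p^n$: $\E_x\chi_s(x)\overline{\chi_{s'}(x)}=\one\{s=s'\}$. Expanding $\abs{F(x)}^2=\sum_{s,s'}A_s\overline{A_{s'}}\chi_s(x)\overline{\chi_{s'}(x)}$ and averaging over $x$ then gives $\sum_s\abs{A_s}^2=Z_I$.

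For the sampling claim, take the inverse QFT over $\F_p^n$ to be $\ket{s}\mapsto p^{-n/2}\sum_x\chi_s(x)\ket{x}$. This is the unitary whose amplitudes match the inversion formula; the sign convention only needs to be chosen consistently with \cref{eq:fourier-convention}. Applying it to $\ket{\widehat I}$ gives
\[
  \frac{1}{\sqrt{Z_I p^n}}\sum_x\Bigl(\sum_s A_s\chi_s(x)\Bigr)\ket{x}=\frac{1}{\sqrt{Z_I p^n}}\sum_x F(x)\ket{x}.
\]
Measuring $x$ therefore yields probability $\abs{F(x)}^2/(p^nZ_I)$. By \cref{eq:ZI-parseval}, $p^nZ_I=\sum_z\abs{F(z)}^2$, so this probability equals $P_u(x)$ from \cref{eq:Pu}.

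The argument is routine, and the main point needing care is bookkeeping rather than any real obstacle. The sign of the exponent in the characters must be consistent, so that $\chi_{Hy}$ and not $\chi_{-Hy}$ appears. The argument also needs $Z_I>0$ so that the target is well defined. If $Z_I$ were $0$, $F$ would vanish identically and $P_u$ would be undefined; in that case the statement is vacuous, or one can invoke the hypothesis that $P_u$ is well defined.
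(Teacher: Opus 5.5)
Your proposal is correct and follows the paper's proof essentially step for step. Both arguments use coordinatewise Fourier inversion with $\widehat g_i(0)=0$, identify the exponent as $\langle Hy,x\rangle$, and group terms by syndrome. Both then apply character orthogonality and the same convention $\ket{s}\mapsto p^{-n/2}\sum_x\chi_s(x)\ket{x}$ for the inverse QFT. Your remark that $Z_I>0$ is needed is a harmless addition, since it is already implicit in the definition of $\ket{\widehat I}$.
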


\begin{proof}
By Fourier inversion and $\widehat g_i(0)=0$,
\begin{align*}
   q_k(x)
   &=\sum_{\abs{T}=k}\prod_{i\in T}
       \left(\sum_{z_i\ne0}\widehat g_i(z_i)
       \omega^{z_i\langle b_i,x\rangle}\right)\\
   &=\sum_{\wt(y)=k}
       \left(\prod_{i:y_i\ne0}\widehat g_i(y_i)\right)
       \omega^{\langle B^{\mathsf T}y,x\rangle}.
\end{align*}
Multiplying by $u_k$ and summing over $k$ proves the first equality in \cref{eq:fourier-expansion}; grouping terms by $s=Hy$ proves the second.  Character orthogonality gives
\[
   \frac1{p^n}\sum_x\abs{F(x)}^2
   =\sum_s\abs{A_s}^2=Z_I.
\]
Finally, under the convention
\[
   (\QFT_p^{-1})^{\otimes n}\ket{s}
      =p^{-n/2}\sum_x\chi_s(x)\ket{x},
\]
the amplitude at $x$ is $F(x)/(p^{n/2}\sqrt{Z_I})$.  After normalization, its squared magnitude gives the probability in \cref{eq:Pu}.
\end{proof}

\begin{lemma}[Norm of the labeled error state]
\label{lem:Z0}
For the weights in \cref{eq:weights},
\begin{equation}
   Z_0=\abs{\Kcal}=\sigma+1
\end{equation}
for all sufficiently large $m$.
\end{lemma}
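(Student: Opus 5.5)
The plan is a direct computation: group the sum defining $Z_0$ by Hamming weight and use the product structure of $a_y$ together with the normalization $\sum_{z\ne0}\abs{\widehat g_i(z)}^2=1$ from \cref{eq:gi-properties}. Write
\[
   Z_0=\sum_{k\in\Kcal}u_k^2\sum_{\substack{T\subseteq[m]\\\abs{T}=k}}\ \sum_{\substack{y:\,\supp(y)=T}}\ \prod_{i\in T}\abs{\widehat g_i(y_i)}^2 .
\]
For a fixed support $T$, the inner sum runs over independent choices $y_i\in\F_p\setminus\{0\}$ for $i\in T$. It therefore factors as $\prod_{i\in T}\sum_{z\ne0}\abs{\widehat g_i(z)}^2=1$.

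Summing over the $\binom{m}{k}$ supports of size $k$ gives $\binom mk$. Multiplying by $u_k^2=\binom mk^{-1}$ from \cref{eq:weights} gives exactly $1$ for each $k\in\Kcal$. Hence $Z_0=\abs{\Kcal}$.

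It remains to identify $\abs{\Kcal}$ with $\sigma+1$. This needs the window $\{\ell-\sigma,\ldots,\ell\}$ to lie inside $\{0,\ldots,m\}$.
\begin{itemize}
   \item Upper end: $\ell=\lfloor(\mu+\delta)m\rfloor\le m/2\le m$, since $\delta\le 1/2-\mu$.
   \item Lower end: $\ell-\sigma\ge0$ once $m$ is large. Here $\mu>0$ is fixed in the limit, so $\ell\ge\mu m-1\to\infty$, while $\sigma=\lfloor\log\log\ell\rfloor=o(\ell)$.
\end{itemize}
One small subtlety: $\sigma$ must be well defined and nonnegative, which requires $\log\log\ell\ge0$, i.e.\ $\ell\ge e$. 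This also holds for large $m$, and it is the reason for the qualifier ``for all sufficiently large $m$''.

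There is no real obstacle. The only points needing care are the factorization over supports, which uses $\widehat g_i(0)=0$ implicitly through restricting to $y_i\ne0$, and the bookkeeping that the shell $\Kcal$ is not truncated by the intersection with $\{0,\ldots,m\}$.
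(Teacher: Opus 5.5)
Your proof is correct and follows the paper's argument: fix a support $T$, factor the inner sum as $\prod_{i\in T}\sum_{z\ne0}\abs{\widehat g_i(z)}^2=1$, and cancel the $\binom mk$ supports against $u_k^2=\binom mk^{-1}$. You also check that $\{\ell-\sigma,\ldots,\ell\}\subseteq\{0,\ldots,m\}$ and that $\sigma\ge0$ for large $m$; the paper leaves this implicit behind the phrase ``for all sufficiently large $m$''.
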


\begin{proof}
Fix a support $T\subseteq[m]$.  By \cref{eq:gi-properties},
\[
   \sum_{\supp(y)=T}
      \prod_{i\in T}\abs{\widehat g_i(y_i)}^2
   =\prod_{i\in T}\sum_{z\ne0}\abs{\widehat g_i(z)}^2=1.
\]
For weight $k$, there are $\binom mk$ supports and $u_k^2=\binom mk^{-1}$, so every active weight contributes one.
\end{proof}

\section{Complete List Decoding of Syndrome Fibers}

\subsection{The Dual Code Is Generalized Reed--Solomon}

For distinct $\alpha_1,\ldots,\alpha_m$, define
\begin{equation}
   \nu_i:=\left(\prod_{j\ne i}(\alpha_i-\alpha_j)\right)^{-1}.
   \label{eq:dual-multipliers}
\end{equation}

\begin{lemma}[Dual evaluation code]
\label{lem:dual-GRS}
The code $\C^\perp=\ker B^{\mathsf T}$ is
\begin{equation}
   \C^\perp
   =\left\{(\nu_1f(\alpha_1),\ldots,\nu_mf(\alpha_m)):
       f\in\F_p[X],\ \deg f<m-n\right\}.
   \label{eq:dual-GRS}
\end{equation}
In particular it is a generalized Reed--Solomon code of dimension $m-n$ and distance $n+1$.
\end{lemma}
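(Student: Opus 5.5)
The proof is a direct verification followed by a dimension count. Write $C_f:=(\nu_1f(\alpha_1),\ldots,\nu_mf(\alpha_m))$ for $\deg f<m-n$, and let $\mathcal D$ denote the right-hand side of \cref{eq:dual-GRS}.

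\emph{Inclusion $\mathcal D\subseteq\C^\perp$.} We must show $B^{\mathsf T}C_f=0$. Row $j$ of $B^{\mathsf T}$ is $(\alpha_1^{j},\ldots,\alpha_m^{j})$ for $0\le j\le n-1$, so the condition is
\[
   \sum_{i=1}^m\nu_i\,\alpha_i^{j}f(\alpha_i)=0
   \qquad\text{for } 0\le j<n.
\]
Put $h(X)=X^jf(X)$, which has degree at most $m-2$. By Lagrange interpolation at the $m$ distinct nodes, $\sum_i\nu_ih(\alpha_i)$ is the coefficient of $X^{m-1}$ in the unique interpolant of degree less than $m$. Equivalently, it is the divided difference $h[\alpha_1,\ldots,\alpha_m]$. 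This coefficient vanishes because $\deg h\le m-2$. This is the key identity $\sum_i\nu_i\alpha_i^t=0$ for $0\le t\le m-2$.

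\emph{Dimensions.} The map $f\mapsto C_f$ is $\F_p$-linear. It is injective because every $\nu_i\neq0$ and a nonzero $f$ of degree less than $m-n\le m$ cannot vanish at all $m$ points. Hence $\dim\mathcal D=m-n$. On the other side, $B$ contains an $n\times n$ Vandermonde minor on distinct points, so $\operatorname{rank}B^{\mathsf T}=n$ and $\dim\ker B^{\mathsf T}=m-n$. Equal dimensions together with the inclusion give $\C^\perp=\mathcal D$.

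\emph{GRS structure and distance.} By definition $\mathcal D$ is a generalized Reed--Solomon code with column multipliers $\nu_i\ne0$. A nonzero codeword has at most $m-n-1$ zeros, since $f$ has at most that many roots. Its weight is therefore at least $n+1$, and this is the Singleton bound, so the code is MDS. For attainment, choose $f=\prod_{i=1}^{m-n-1}(X-\alpha_i)$, which has weight exactly $n+1$.

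The only real content is the vanishing identity $\sum_i\nu_i\alpha_i^t=0$ for $t\le m-2$. Proving it via the interpolation or partial-fractions argument is routine.
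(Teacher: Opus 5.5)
Your proof is correct and takes essentially the same route as the paper: the Lagrange-interpolation identity $\sum_i\nu_i q(\alpha_i)=0$ for $\deg q\le m-2$ gives the inclusion, a dimension count gives equality, and the root bound gives the distance. You also spell out two details the paper leaves implicit: injectivity of $f\mapsto C_f$, and an explicit weight-$(n+1)$ codeword showing the distance is attained rather than only bounded below.
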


\begin{proof}
Let $q\in\F_p[X]$ have degree at most $m-2$.  Lagrange interpolation writes
\[
   q(X)=\sum_{i=1}^m q(\alpha_i)
        \prod_{j\ne i}\frac{X-\alpha_j}{\alpha_i-\alpha_j}.
\]
The coefficient of $X^{m-1}$ on the left is zero, while on the right it is
$\sum_i\nu_iq(\alpha_i)$.  Therefore
\begin{equation}
   \sum_{i=1}^m\nu_iq(\alpha_i)=0
   \quad\text{whenever }\deg q\le m-2.
   \label{eq:lagrange-orthogonality}
\end{equation}
If $h$ has degree less than $n$ and $f$ has degree less than $m-n$, then $q=hf$ has degree at most $m-2$.  Equation \eqref{eq:lagrange-orthogonality} shows that the vector in \cref{eq:dual-GRS} is orthogonal to every evaluation vector $(h(\alpha_i))_i$, so it lies in $\C^\perp$.  The displayed family has dimension $m-n$, equal to $\dim\C^\perp$, hence equality holds.  The distance statement is the standard root bound: a nonzero polynomial of degree less than $m-n$ has at most $m-n-1$ zeros among the evaluation points.
\end{proof}

\subsection{Fibers as Decoding Lists}

Since $B$ has full column rank, $H=B^{\mathsf T}$ has full row rank.  Choose any right inverse $R\in\F_p^{m\times n}$ satisfying $HR=I_n$, and define the canonical representative
\begin{equation}
   r_s:=Rs.
   \label{eq:representative}
\end{equation}
Then
\begin{equation}
   Hy=s
   \quad\Longleftrightarrow\quad
   y=r_s-c\text{ for a unique }c\in\C^\perp.
   \label{eq:coset-bijection}
\end{equation}
Consequently,
\begin{equation}
   \Lcal_s
   =\{r_s-c:c\in\C^\perp,\ \wt(r_s-c)\in\Kcal\}.
   \label{eq:fiber-list}
\end{equation}
Thus a complete decoder for $\C^\perp$ centered at $r_s$ through radius $\ell$, followed by shell filtering, returns exactly the fiber $\Lcal_s$.

Polynomial-time list decoding up to the Johnson radius originates with Guruswami and Sudan \cite{GuruswamiSudan1999}.  For a uniform reversible implementation with complexity polynomial in $\log p$, we use the following deterministic theorem.

\begin{theorem}[Chatterjee--Harsha--Kumar \cite{ChatterjeeEtAl2026}]
\label{thm:CHK}
For every finite field $\F$, block length $M$, and dimension $K<M$, there is a deterministic algorithm running in $\poly(M,\log\abs{\F})$ bit operations that outputs every Reed--Solomon codeword agreeing with a received word in more than $\sqrt{(K-1)M}$ coordinates.
\end{theorem}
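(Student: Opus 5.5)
The statement is a cited external theorem, so the plan is to reduce it to the Guruswami--Sudan interpolation-and-factorization framework and then make each step deterministic and polynomial in $\log\abs{\F}$. Given a received word $w\in\F^M$ with evaluation points $\beta_1,\ldots,\beta_M$, I would choose a multiplicity parameter $r$ and a $(1,K-1)$-weighted degree bound $D$. These are set so that a nonzero bivariate polynomial $Q(X,Y)$ exists which vanishes with multiplicity at least $r$ at every point $(\beta_j,w_j)$. The existence argument is a linear-algebra count: the number of monomials of weighted degree at most $D$, roughly $D^2/(2(K-1))$, must exceed the number of constraints, $M\binom{r+1}{2}$. A nonzero solution of this homogeneous system is then found by deterministic Gaussian elimination over $\F$, which costs $\poly(M,r,\log\abs{\F})$ bit operations.

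Next I would show that every codeword $f$ with agreement $t>\sqrt{(K-1)M}$ satisfies $(Y-f(X))\mid Q(X,Y)$. The univariate polynomial $Q(X,f(X))$ has degree at most $D$ and vanishes to order $r$ at each agreeing $\beta_j$. So if $tr>D$, it is identically zero. Taking $r=\poly(M)$ and $D\approx\sqrt{(K-1)M r(r+1)}$ makes $tr>D$ hold for every integer $t>\sqrt{(K-1)M}$. The slack between the integer $t$ and the real threshold is at least of order $1/M$, so a polynomially large $r$ suffices. Consequently the complete list is contained in the set of $Y$-roots of $Q$, and its size is at most $\deg_YQ\le D/(K-1)$.

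The main obstacle is the root-finding step, namely finding all $f$ with $Q(X,f(X))=0$ deterministically. Standard bivariate factorization relies on randomized univariate factorization over $\F$ when the characteristic is large, and deterministic univariate root finding over large prime fields is open in general. My first approach would be to avoid univariate factorization over $\F$ altogether. I would lift roots modulo a suitable $X-\gamma$ via Hensel/Newton iteration, or reduce to a structured root-finding problem in which the candidate roots $f(\gamma)$ are already constrained, for example to values $w_j$ at agreement points. One concrete route is the following. For each coordinate $j$, the value $f(\beta_j)=w_j$ is a known root of $Q(\beta_j,Y)$. Starting from $(\beta_j,w_j)$, I would Hensel-lift that root to a power series in $X-\beta_j$, truncate it to degree $<K$, and test the result. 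This requires $Q$ to be made squarefree in $Y$ and the lifted root to be simple. That nondegeneracy is exactly where the Chatterjee--Harsha--Kumar contribution must enter, and I would expect to handle it using derivatives of $Q$ or a deterministic shift. Every codeword with more than $\sqrt{(K-1)M}$ agreements agrees at some $j$, so enumerating all $j$ would yield the complete list. A final pass keeps only candidates whose agreement exceeds the threshold.
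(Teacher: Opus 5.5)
The paper gives no proof of this statement; it imports it as a black box from Chatterjee--Harsha--Kumar, so your sketch has to stand on its own. Your interpolation step is the standard Guruswami--Sudan argument and is fine: it uses deterministic Gaussian elimination, a multiplicity $r$ of order $(K-1)M$ to absorb the integrality slack, and gives list size at most $\deg_YQ$. The gap is the root-finding step, which is the entire content of the cited theorem. You explicitly defer the nondegeneracy issue to ``where the Chatterjee--Harsha--Kumar contribution must enter.'' As written, the proposal therefore does not give a deterministic $\poly(M,\log\abs{\F})$ algorithm.

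Moreover, the mechanism you suggest fails in general, not just in corner cases. Your $r$ is at least $2$; it must be large to reach $\sqrt{(K-1)M}$ rather than Sudan's bound. So $Q$ vanishes to order $r$ at every $(\beta_j,w_j)$. In particular $\partial_YQ(\beta_j,w_j)=0$, and $w_j$ is never a simple root of $Q(\beta_j,Y)$.

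Passing to the squarefree part does not help. Suppose $(Y-f(X))^e$ exactly divides $Q$ with $e<r$, for example $e=1$. Then the cofactor $Q/(Y-f(X))^e$ still vanishes to order at least $r-e\ge1$ at every agreement point of $f$. So the squarefree part of $Q$ contains both $Y-f(X)$ and a coprime factor vanishing there. It is therefore singular at \emph{every} point from which you intended to Hensel-lift $f$.

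The alternatives you mention also fail. A deterministic shift to some $\gamma$ where $Q(\gamma,Y)$ is squarefree discards the known root and requires the $\F$-roots of $Q(\gamma,Y)$. Separating branches at a singular point by blowing up requires the roots of the tangent-cone polynomial, and these include the unknown slopes $f'(\beta_j)$. Both are deterministic univariate root finding over a large prime field, which is the problem you yourself identified as open. What is missing is a new idea that exploits the special structure of Guruswami--Sudan instances, and supplying it is precisely the contribution of the cited work.
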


\begin{corollary}[Complete decoding of the dual OPI code]
\label{cor:dual-list-decoding}
There is a deterministic algorithm that, on input $s\in\F_p^n$, outputs exactly $\Lcal_s$ in time $\poly(m,\log p)$ whenever
\begin{equation}
   m-\ell>\sqrt{(m-n-1)m}.
   \label{eq:finite-Johnson}
\end{equation}
The output list has polynomial size.
\end{corollary}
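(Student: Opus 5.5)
We reduce the problem to \cref{thm:CHK} by rescaling coordinates. On input $s$, first compute $r_s=Rs$ as in \cref{eq:representative}; the preprocessor supplies $R$ and the multipliers $\nu_i$ of \cref{eq:dual-multipliers}, each in $\poly(m,\log p)$ time. Each $\nu_i$ is nonzero, so the diagonal map $D=\operatorname{diag}(\nu_1^{-1},\ldots,\nu_m^{-1})$ is a weight-preserving bijection of $\F_p^m$. By \cref{lem:dual-GRS} it carries $\C^\perp$ onto the ordinary Reed--Solomon code $\{(f(\alpha_i))_i:\deg f<m-n\}$, which has block length $M=m$ and dimension $K=m-n$. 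Put $w:=Dr_s$. Then $\wt(r_s-c)=\wt(w-Dc)$ for every $c\in\C^\perp$. Consequently, by \cref{eq:fiber-list}, $y\in\Lcal_s$ holds exactly when $y=r_s-c$ and the Reed--Solomon codeword $Dc$ agrees with $w$ in $m-\wt(y)$ coordinates, where $\wt(y)\in\Kcal$.

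Every $k\in\Kcal$ satisfies $k\le\ell$, so each such codeword agrees with $w$ in at least $m-\ell$ positions. By \cref{eq:finite-Johnson}, $m-\ell>\sqrt{(m-n-1)m}=\sqrt{(K-1)M}$. \Cref{thm:CHK} therefore outputs a list containing every such codeword, in $\poly(m,\log p)$ bit operations. We then transform each output codeword back by $D^{-1}$ to get $c$, form $y=r_s-c$, and keep $y$ only if $\wt(y)\in\Kcal$. This shell filtering discards any extra codewords the decoder returns with weight below $\ell-\sigma$. The filtered set equals $\Lcal_s$ exactly: it contains every element because the decoder's list is complete, and it contains nothing extra because of \cref{eq:coset-bijection}. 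Each step costs polynomially many field operations, so the total time is $\poly(m,\log p)$.

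For the list size, a polynomial-time deterministic algorithm can only write a polynomially long output, so $\abs{\Lcal_s}\le\poly(m,\log p)$ follows directly from the running time. If we want a bound independent of $\log p$, such as the $m^2$ mentioned in the overview, we use the standard Johnson pair-counting argument instead. Two distinct Reed--Solomon codewords agree in at most $K-1$ positions. Let $N$ codewords each agree with $w$ in at least $t>\sqrt{(K-1)M}$ positions. Counting triples (coordinate, pair of codewords agreeing with $w$ there) and applying Cauchy--Schwarz gives
\[
   N t^2/M - Nt \le \sum_i \binom{a_i}{2}\cdot 2 \le N(N-1)(K-1),
\]
where $a_i$ is the number of listed codewords agreeing with $w$ at coordinate $i$. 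Rearranging yields $N\le (t-K+1)M/(t^2-(K-1)M)$. We are working with integers, so $t^2-(K-1)M\ge1$, and hence $N\le M^2=m^2$.

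The main obstacle, which is mild, is the strict inequality in \cref{eq:finite-Johnson} and how it interacts with the shell. Every fiber element must satisfy the decoder's strict agreement threshold. This holds because $\wt(y)\le\ell$ gives agreement of at least $m-\ell$. The remaining point is that the rescaling by $D$ preserves agreements coordinatewise, which is clear because $D$ is diagonal and invertible.
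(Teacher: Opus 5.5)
Your proof is correct and matches the paper's argument: rescale coordinates by $\nu_i^{-1}$ to reach the ordinary Reed--Solomon code of dimension $m-n$, invoke \cref{thm:CHK} at agreement $m-\ell>\sqrt{(m-n-1)m}$, map back, form $y=r_s-c$, filter by $\wt(y)\in\Kcal$, and deduce polynomial list size from the running time. Your optional $m^2$ bound is the paper's \cref{lem:explicit-list-bound}, with two small fixes: the Cauchy--Schwarz term should read $N^2t^2/M$, and $\sum_i a_i=Nt$ requires fixing exactly $t$ agreement positions per codeword; also, the shell filter removes not only weights below $\ell-\sigma$ but also any returned codewords with $\wt(y)>\ell$.
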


\begin{proof}
By \cref{lem:dual-GRS}, divide coordinate $i$ of $r_s$ by the nonzero multiplier $\nu_i$.  This maps $\C^\perp$ to the ordinary Reed--Solomon code of dimension $m-n$ and preserves Hamming distance.  Under \cref{eq:finite-Johnson}, every codeword at distance at most $\ell$ has agreement at least $m-\ell>\sqrt{(m-n-1)m}$, so \cref{thm:CHK} outputs all of them.  Rescale, form $y=r_s-c$, retain exactly those with $\wt(y)\in\Kcal$, remove duplicates, and sort lexicographically.  Every step is deterministic and polynomial time.  A polynomial-time machine can write only polynomially many output entries, which gives the list-size bound.
\end{proof}

\begin{lemma}[Explicit uniform list bound]
\label{lem:explicit-list-bound}
Let $\C$ be an $[m,k]$ generalized Reed--Solomon code, put $a=m-\ell$, and suppose $a^2>m(k-1)$.  Every Hamming ball of radius $\ell$ contains at most
\begin{equation}
   J(m,k,a):=
   \frac{a-(k-1)}{a^2/m-(k-1)}\le m^2
   \label{eq:explicit-list-bound}
\end{equation}
codewords.  If $k/m$ and $a/m$ converge while $a^2/m^2-k/m$ stays positive by a fixed constant, then $J(m,k,a)=O(1)$.
\end{lemma}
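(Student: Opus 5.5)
Two codewords of a $[m,k]$ GRS code differ in at least $m-k+1$ positions, so they agree in at most $k-1$ coordinates. Since the GRS multipliers are nonzero, I would first rescale coordinatewise and work with an ordinary Reed--Solomon code; Hamming balls and agreement sets are preserved. The bound $J(m,k,a)$ is exactly the classical Johnson bound in its agreement form, so the plan is the standard pair-counting (second-moment) argument.

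Fix a received word $w$ and let $c_1,\ldots,c_N$ be the codewords in the Hamming ball of radius $\ell$ around $w$. Let $A_j\subseteq[m]$ be the agreement set of $c_j$ with $w$, so $\abs{A_j}\ge a$. For $j\ne j'$, any coordinate in $A_j\cap A_{j'}$ is one where $c_j$ and $c_{j'}$ agree, so $\abs{A_j\cap A_{j'}}\le k-1$. For each coordinate $i$, let $d_i$ be the number of $j$ with $i\in A_j$. Then $\sum_i d_i\ge Na$, and
\[
   \sum_i d_i^2=\sum_{j,j'}\abs{A_j\cap A_{j'}}\le \sum_i d_i+N(N-1)(k-1).
\]
Cauchy--Schwarz gives $\sum_i d_i^2\ge(\sum_i d_i)^2/m$. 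Writing $D=\sum_i d_i$, we get $D^2/m-D\le N(N-1)(k-1)$. The left side is increasing in $D$ for $D\ge m/2$. I would check that $Na\ge m/2$ or handle small cases; more simply, substitute $D\ge Na$ directly whenever $Na\ge m/2$, and note that $a^2>m(k-1)\ge0$ forces $a\ge1$. This yields $N^2a^2/m-Na\le N^2(k-1)-N(k-1)$. Dividing by $N$ and rearranging gives $N(a^2/m-(k-1))\le a-(k-1)$, which is the bound $J$, because the denominator is positive by hypothesis. The only delicate point is the monotonicity step: if $Na<m/2$ then $N<m/(2a)\le m/2$, which is already compatible with the final $m^2$ bound. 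So I would either split into these two cases or use the convexity argument with $D$ replaced by its lower bound in the form $D(D/m-1)$. I expect this bookkeeping to be the main, though minor, obstacle.

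For $J\le m^2$: the numerator is at most $a\le m$. For the denominator, $a^2-m(k-1)$ is a positive integer, so $a^2/m-(k-1)\ge 1/m$. Hence $J\le m\cdot m=m^2$. (If the numerator is not positive the ball holds at most one codeword anyway, but $a^2>m(k-1)$ with $a\le m$ gives $a>k-1$, so it is positive.) Finally, if $a/m\to\alpha$ and $k/m\to\kappa$ with $\alpha^2-\kappa\ge\epsilon>0$, divide the numerator and denominator by $m$. The numerator becomes $\alpha-\kappa+o(1)\le1$ and the denominator becomes $\alpha^2-\kappa+o(1)\ge\epsilon/2$ eventually, so $J\le 2/\epsilon=O(1)$.
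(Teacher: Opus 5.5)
Your argument is the paper's: the classical Johnson-bound pair count with Cauchy--Schwarz, followed by the same integrality argument for $J\le m^2$ and the same normalization for $J=O(1)$. There is one small gap, in the branch $Na<m/2$.

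\textbf{The gap.} In that branch you conclude only $N<m/2$ and call this compatible with the $m^2$ bound. But the lemma asserts $N\le J(m,k,a)$, not merely $N\le m^2$. The sharper bound is the one the paper relies on when it pads every fiber to a common length $L\ge\lceil J\rceil$.

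\textbf{The repair.} The case split is unnecessary. For $Na\le m$,
\[
N^2a^2/m-Na=Na(Na/m-1)\le 0\le N(N-1)(k-1),
\]
so the inequality you rearrange holds in every case. Alternatively, since $a\le m$ and $k\ge1$, one checks $J\ge m/a$, so in that branch $N<m/(2a)<J$.

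\textbf{How the paper avoids the issue.} For each codeword it fixes an $a$-element subset of the agreement set. This makes $\sum_i t_i=Na$ exactly, so Cauchy--Schwarz yields the required lower bound directly, with no monotonicity step in $D$.
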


\begin{proof}
Since $a\le m$ and $a^2>m(k-1)$, we have $a>k-1$; in particular, the numerator and denominator in \cref{eq:explicit-list-bound} are positive.
Suppose that $N$ codewords lie in the ball.  For each codeword choose an $a$-element subset $S_u$ of coordinates on which it agrees with the received word, and put $t_i=\abs{\{u:i\in S_u\}}$.  Distinct generalized Reed--Solomon codewords agree in at most $k-1$ coordinates, so
\[
   \sum_i\binom{t_i}{2}
   =\sum_{u<v}\abs{S_u\cap S_v}
   \le\binom N2(k-1).
\]
Since $\sum_it_i=Na$, Cauchy--Schwarz gives
\[
   \sum_i\binom{t_i}{2}
   \ge\frac12\left(\frac{N^2a^2}{m}-Na\right).
\]
Rearranging yields
$N(a^2/m-(k-1))\le a-(k-1)$.  The displayed bound also holds when $N=1$ because
$J-1=a(1-a/m)/(a^2/m-(k-1))\ge0$.  Finally, $a^2-m(k-1)$ is a positive integer, so the denominator in \cref{eq:explicit-list-bound} is at least $1/m$, while its numerator is at most $m$.  Under a fixed normalized margin, the numerator and denominator are both linear in $m$, proving the last assertion.
\end{proof}

\section{Coherent Fiber Summation}

\subsection{A Canonical Fixed-Length List Representation}

Fix one deterministic implementation of \cref{cor:dual-list-decoding}, including lexicographic sorting and duplicate removal.  Apply \cref{lem:explicit-list-bound} with $k=m-n$ and $a=m-\ell$, and set
\begin{equation}
   B_\ell:=\left\lceil J(m,m-n,m-\ell)\right\rceil\le m^2,
   \qquad
   L:=2^{\lceil\log_2B_\ell\rceil}<2m^2,
   \qquad q:=\log_2L.
   \label{eq:global-L}
\end{equation}
Represent the sorted fiber as
\[
   \Lcal_s=(y_{s,0},\ldots,y_{s,N_s-1})
\]
and pad slots $N_s,\ldots,L-1$ with a validity bit equal to zero.  The same $L$ is used for every syndrome.  At a fixed strict asymptotic Johnson margin, both $B_\ell$ and $L$ are constant in $m$.

\begin{lemma}[Canonical reversible indexing]
\label{lem:index-unitary}
There is a uniform quantum circuit whose size is polynomial in $m$ and $\log p$.  Its action on every supported basis state is
\begin{equation}
   \ket{y_{s,j}}\ket{s}\ket{0^q}_J\ket0_Q\ket{0^{\mathrm{anc}}}
   \longmapsto
   \ket{\mathbf 0}_{\F_p^m}\ket{s}\ket{j}_J\ket1_Q\ket{0^{\mathrm{anc}}},
   \qquad 0\le j<N_s.
   \label{eq:index-map}
\end{equation}
Here $Q$ is a match flag.  The circuit is a unitary on the full computational basis, including inputs that are not valid fiber elements or valid field encodings.
\end{lemma}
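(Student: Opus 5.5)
The plan is the standard compute--copy--uncompute pattern applied to a classical reversible function, followed by an in-place erasure of the error register. First, by \cref{cor:dual-list-decoding} with a fixed padded running time, there is a deterministic classical circuit of size $\poly(m,\log p)$ that, on input $s$ (any $b$-bit strings, with out-of-range encodings handled by a fixed total convention), writes the padded sorted list $(y_{s,0},\ldots,y_{s,L-1})$ together with validity bits. Bennett's construction compiles it into a reversible circuit $U_{\mathrm{dec}}\colon\ket{s}\ket{0}\mapsto\ket{s}\ket{\mathrm{list}(s)}\ket{\mathrm{garbage}(s)}$, which acts as a permutation of the full computational basis. The garbage can be left in place for now, because the whole computation will be reversed later.

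Second, with the list written out, I compute the index reversibly from $(y,\mathrm{list}(s))$. For each slot $t<L$ I compare $y$ with $y_{s,t}$ and the validity bit into a fresh ancilla bit $e_t$. Since the list has no duplicates, at most one $e_t=1$. I then XOR the match flag $\bigvee_t e_t$ into $Q$ and $\bigoplus_t e_t\cdot t$ into $J$. Both operations are XOR-into-target with controls, so they are unitary on all inputs. Next I uncompute the $e_t$ and then apply $U_{\mathrm{dec}}^{-1}$, which restores $\ket{s}\ket{0^{\mathrm{anc}}}$ for every input. At this point the action is $\ket{y}\ket{s}\ket{0}_J\ket{0}_Q\mapsto\ket{y}\ket{s}\ket{j}\ket{1}$ when $y=y_{s,j}$, and $\ket{y}\ket{s}\ket{0}\ket{0}$ otherwise.

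Third, I erase $y$. I recompute the list with $U_{\mathrm{dec}}$, and then, controlled on $Q=1$ and on $J=t$, XOR the entry $y_{s,t}$ into the $y$ register. On supported inputs this zeroes it. The operation is a controlled XOR of a function of the other registers, hence again a permutation. Finally I uncompute the list with $U_{\mathrm{dec}}^{-1}$. Note that $y\oplus y_{s,t}$ must be taken as bitwise XOR on the $b$-bit encodings, not field subtraction, so that out-of-range strings remain well defined.

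The step I expect to need the most care is totality and consistency on invalid inputs, so that the combined map is a genuine unitary on the full basis and not merely an isometry on the support. Because every stage is either a reversible classical permutation or an XOR of a function of untouched registers into a target, the composition is a permutation, and its restriction to supported inputs with zeroed $J,Q$ is \eqref{eq:index-map}. The size bound follows from $L<2m^2$ (\cref{eq:global-L}) and the polynomial decoder cost.
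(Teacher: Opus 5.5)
Your proposal is correct and follows essentially the same route as the paper: compute the padded canonical list reversibly, write the matching slot index and the match flag into $J,Q$, clear the comparison scratch, erase the error register by a $Q$-controlled removal of the selected entry, and finally reverse the decoder. The differences are cosmetic. You uncompute and then recompute the list between the two phases, whereas the paper keeps the workspace $W$ throughout. You also erase by bitwise XOR on the $b$-bit encodings rather than by field subtraction, which makes totality on invalid encodings slightly more transparent.
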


\begin{proof}
Pad the deterministic decoder and its sorting routine to a fixed polynomial number of steps.  Bennett's compute--copy--uncompute construction turns this fixed-time computation into a reversible circuit with polynomial overhead \cite{Bennett1973}.  On input $s$, compute the padded canonical list and its validity bits into workspace $W$.  Reversibly compare $y$ with every valid slot, compute the first matching index into $J$, and set $Q=1$ exactly when a match exists.  When no slot matches, use the harmless convention $J=0,Q=0$.  Uncompute all comparison scratch space before changing the error register.  Thus a genuine match in slot zero is distinguished from a no-match input.

Controlled on $Q=1$, use $(W,J)$ to compute the selected list entry into a temporary register, subtract it from the error register, and reverse the selection computation to clear the temporary register.  Finally reverse the decoder computation to clear $W$; its input $s$ has not changed.  If $y=y_{s,j}$ is valid, the error register is zero and the retained pair is $(J,Q)=(j,1)$.  On a no-match or invalid-encoding input the controlled subtraction is skipped.  Every step is a reversible permutation of basis states with specified behavior on all invalid inputs, so the circuit is a unitary on the entire Hilbert space and \cref{eq:index-map} holds on the support of \cref{eq:labeled-state}.
\end{proof}

\subsection{Uniform List-Index Projection}

Apply \cref{lem:index-unitary} to \cref{eq:labeled-state}.  The result is
\begin{equation}
   \frac1{\sqrt{Z_0}}
   \sum_s\sum_{j<N_s}a_{y_{s,j}}\ket{\mathbf 0}_{\F_p^m}\ket{s}\ket{j}_J\ket1_Q.
   \label{eq:indexed-state}
\end{equation}
Apply $\Had^{\otimes q}$ to $J$ and measure $(J,Q)$ in the computational basis.

\begin{theorem}[Exact coherent fiber sum]
\label{thm:exact-fiber-sum}
If $Z_I>0$, then conditioned on the joint outcome $J=0^q,Q=1$, the syndrome register in \cref{eq:indexed-state} is exactly $\ket{\widehat I}$ from \cref{eq:ideal-syndrome}.  The success probability is exactly
\begin{equation}
   p_{\mathrm{fib}}=\frac{Z_I}{LZ_0}.
   \label{eq:success-exact}
\end{equation}
After the inverse QFT over $\F_p^n$, measurement samples exactly from $P_u$.
\end{theorem}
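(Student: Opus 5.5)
The proof is a direct calculation on the state \cref{eq:indexed-state}. First we justify that form. The state $\ket{\Psi_0}$ from \cref{eq:labeled-state} is supported on pairs $(y,Hy)$ with $a_y\neq0$. Such $y$ satisfy $\wt(y)\in\Kcal$, because $u_k=0$ outside $\Kcal$. By \cref{eq:fiber-list} and the completeness guaranteed by \cref{cor:dual-list-decoding}, every such $y$ equals $y_{s,j}$ for $s=Hy$ and a unique $j<N_s$; here we use the Johnson hypothesis built into the choice of $L$. Since $N_s\le B_\ell\le L$ by \cref{lem:explicit-list-bound}, every index fits in the $q$-qubit register. The map $(y,Hy)\mapsto(s,j)$ is injective on the support. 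Applying \cref{lem:index-unitary} term by term, using linearity, therefore gives exactly \cref{eq:indexed-state}, with the error register reset to $\ket{\mathbf 0}$ and $Q=1$ on every branch.

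Next we apply $\Had^{\otimes q}$ to $J$. The amplitude of $\ket{0^q}_J$ in $\Had^{\otimes q}\ket{j}$ is $L^{-1/2}$ for every $j$. So projecting onto $J=0^q$, $Q=1$ leaves the unnormalized vector
\[
   \frac{1}{\sqrt{LZ_0}}\sum_s\Bigl(\sum_{j<N_s}a_{y_{s,j}}\Bigr)\ket{\mathbf 0}\ket{s}
   =\frac{1}{\sqrt{LZ_0}}\sum_s A_s\ket{\mathbf 0}\ket{s}.
\]
The equality holds because $\{y_{s,j}\}_{j<N_s}=\Lcal_s$, which is the definition in \cref{eq:fibers}. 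This unnormalized vector has squared norm $Z_I/(LZ_0)$, which is the success probability \cref{eq:success-exact}. Because $Z_I>0$, this probability is positive and the conditional state is well defined. The error and flag registers are in fixed product basis states, so the syndrome register is exactly $Z_I^{-1/2}\sum_sA_s\ket{s}=\ket{\widehat I}$.

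Finally, \cref{lem:fourier-expansion} states that the inverse QFT over $\F_p^n$ applied to $\ket{\widehat I}$ produces a state whose measurement distribution is $P_u$. This completes the sampling claim.

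The main point requiring care is the use of one global $L$ for every syndrome. The factor $L^{-1/2}$ is the same for all $s$, so it passes into the global normalization and does not reweight the $A_s$. A second point is that the padded slots must not contribute. Slots $j\ge N_s$ are never populated, since the indexer only outputs $j<N_s$ with $Q=1$ on supported inputs, so they carry zero amplitude. The case $N_s=0$ is also harmless: those syndromes do not appear in the support at all and have $A_s=0$. We also need to check that $Z_0>0$ so that $\ket{\Psi_0}$ is normalized; this follows from \cref{lem:Z0}.
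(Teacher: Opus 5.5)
Your proposal is correct and follows the paper's route: the uniform overlap $\langle0^q\vert\Had^{\otimes q}\vert j\rangle=L^{-1/2}$ collapses each fiber to $A_s$, the squared norm of the projected branch gives $Z_I/(LZ_0)$, and \cref{lem:fourier-expansion} yields the sampling claim. The paper also checks $Z_I\le LZ_0$ by Cauchy--Schwarz within each fiber, which your argument gets for free because the branch is a projection of a unit vector; in turn, you spell out the derivation of \cref{eq:indexed-state}, which the paper only states just before the theorem.
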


\begin{proof}
For every $j\in\{0,\ldots,L-1\}$,
\[
   \langle0^q\vert \Had^{\otimes q}\vert j\rangle=L^{-1/2}.
\]
Hence the unnormalized successful branch is
\begin{align*}
   \frac1{\sqrt{LZ_0}}\sum_s\sum_{j<N_s}a_{y_{s,j}}\ket{s}
   &=\frac1{\sqrt{LZ_0}}\sum_sA_s\ket{s}.
\end{align*}
Its squared norm is $Z_I/(LZ_0)$, and normalization gives \cref{eq:ideal-syndrome}.  This is a valid probability: by Cauchy--Schwarz in each fiber,
\[
   Z_I=\sum_s\abs{\sum_{y\in\Lcal_s}a_y}^2
      \le\sum_sN_s\sum_{y\in\Lcal_s}\abs{a_y}^2
      \le LZ_0.
\]
The final claim follows from \cref{lem:fourier-expansion}.
\end{proof}

\begin{remark}[Uniform padding across lists]
A syndrome-dependent projection length $N_s$ would produce the state
$\sum_s A_sN_s^{-1/2}\ket{s}$.  Padding every list to one global length $L$ replaces these varying factors by the common factor $L^{-1/2}$ and yields the Sun--Wootters state.
\end{remark}

\subsection{Algorithm Statement}

One sampling attempt consists of the following operations.
\begin{enumerate}[label=\textbf{\arabic*.},leftmargin=2.2em]
  \item Prepare the labeled state $\ket{\Psi_0}$ in \cref{eq:labeled-state}.
  \item Apply the reversible canonical indexing circuit of \cref{lem:index-unitary}.
  \item Apply $\Had^{\otimes q}$ to the list-index register and measure $(J,Q)$.  Restart unless $(J,Q)=(0^q,1)$.
  \item Apply $(\QFT_p^{-1})^{\otimes n}$ to the syndrome register and measure $x\in\F_p^n$.
\end{enumerate}

The next section proves that all four steps have polynomial complexity in a standard random-access model and that the restart probability is polynomially bounded under the Sun--Wootters condition.

\section{State Preparation, Complexity, and Robustness}

\subsection{Preparing the Labeled Error State}

We first isolate the same local data-access primitive used by DQI.

\begin{definition}[Fourier-state access]
For each $i\in[m]$, let $U_i$ be a clean unitary satisfying
\begin{equation}
   U_i\ket{0}=\sum_{z\in\F_p}\widehat g_i(z)\ket{z}.
   \label{eq:local-fourier-state}
\end{equation}
Controlled applications of $U_i$ and $U_i^\dagger$ are available at cost $T_g$.
\end{definition}

\begin{proposition}[Labeled error-state preparation]
\label{prop:shell-prep}
Given Fourier-state access, the labeled state $\ket{\Psi_0}$ can be prepared with
$\poly(m,\log p)+O(mT_g)$ elementary operations and clean ancillas.
\end{proposition}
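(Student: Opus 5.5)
We follow the standard DQI preparation, modified for the shell weights. The target is
\[
\ket{\Psi_0}=\frac1{\sqrt{Z_0}}\sum_y a_y\ket{y}\ket{Hy},
\qquad a_y=u_{\wt(y)}\prod_{i:y_i\ne0}\widehat g_i(y_i).
\]
By \cref{lem:Z0}, for each active weight $k$ the sub-sum over weight-$k$ vectors has squared norm exactly $1$. Hence $\ket{\Psi_0}$ factors as a uniform superposition over $k\in\Kcal$, a uniform superposition over supports $T$ with $\abs{T}=k$, and a product state of local nonzero Fourier states on $T$. The syndrome $Hy$ is then appended by a reversible linear map. The plan has four stages: weight register, Dicke state, local states, and cleanup plus syndrome.

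\textbf{Weight register and Dicke state.} First prepare $\abs{\Kcal}^{-1/2}\sum_{k\in\Kcal}\ket{k}$. This is a uniform state over $\sigma+1$ values, so it can be prepared exactly by standard methods using controlled rotations computed classically. Controlled on $k$, prepare the Dicke state $\binom mk^{-1/2}\sum_{\abs{T}=k}\ket{\one_T}$ on $m$ qubits. Use the Bärtschi--Eidenbenz recursive construction, whose $O(mk)$ controlled rotations have angles $\arccos\sqrt{\cdot}$ of ratios of integers at most $m$; these angles are computed by the classical preprocessor. Next, uncompute $k$ by reversibly computing the Hamming weight of the support register and subtracting it from the $k$ register. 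This leaves $\abs{\Kcal}^{-1/2}\sum_k\binom mk^{-1/2}\sum_{\abs T=k}\ket{\one_T}$.

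\textbf{Local states.} For each $i$, controlled on support bit $i$ equal to $1$, prepare on an $\F_p$ register the normalized state $\sum_{z\ne0}\widehat g_i(z)\ket z$. Because $\widehat g_i(0)=0$ and $\sum_{z\ne 0}\abs{\widehat g_i(z)}^2=1$, this is exactly $U_i\ket0$. A single controlled $U_i$ therefore suffices, with no postselection. When the support bit is $0$ the register stays $\ket0$. This step costs $O(mT_g)$.

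\textbf{Cleanup and syndrome.} The support bit $i$ equals $\one\{y_i\ne0\}$, so it can be computed reversibly from $y_i$ with a zero test and XORed away. This disentangles the Dicke register, and the resulting amplitude on $\ket y$ is exactly $a_y/\sqrt{Z_0}$. Finally compute $Hy=B^{\mathsf T}y$ into a fresh register by reversible modular multiply-accumulate. This takes $O(mn)$ field operations, which is $\poly(m,\log p)$, with powers $\alpha_i^j$ precomputed classically.

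\textbf{Expected main obstacle.} The Dicke-state step must be \emph{exact}, since \cref{thm:exact-fiber-sum} is an exact statement in the ideal circuit. We also need the invalid encodings in $\mathcal H_p^\perp$ to remain unpopulated. Both follow from the ideal-rotation model, with finite precision deferred to \cref{cor:finite-precision}. The uncomputation of $k$ and of the support bits relies only on the exact support relations, which hold by construction.
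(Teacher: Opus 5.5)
Your proposal is correct and follows the paper's proof. The steps match: a uniform weight register over $\Kcal$, a $k$-controlled Dicke state, controlled $U_i$ (which needs no postselection because $\widehat g_i(0)=0$), clearing the support bits with a zero test, uncomputing $k$ through a reversible Hamming weight, and computing $Hy$ reversibly. The differences are cosmetic: you use the Bärtschi--Eidenbenz Dicke circuit instead of the paper's counter-driven sequential rotations, you clear $k$ from the support register's weight before the local states rather than from $\wt(y)$ afterwards, and you leave implicit the per-primitive synthesis and hybrid argument the paper uses for finite precision (the corollary you cite assumes that bound rather than proving it).
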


\begin{proof}
Prepare a uniform superposition over $k\in\Kcal$ using a standard reversible range-state construction, with all rotations synthesized to the precision required below. Conditioned on $k$, prepare the Dicke state
\[
   \binom mk^{-1/2}\sum_{\substack{t\in\{0,1\}^m\\\wt(t)=k}}\ket{t}.
\]
A uniform Dicke state can be prepared sequentially: if $N$ positions and $R$ ones remain, rotate the next bit with amplitudes $\sqrt{(N-R)/N}$ and $\sqrt{R/N}$ and update the counters.  Induction on $N$ shows that every weight-$k$ string receives amplitude $\binom mk^{-1/2}$; reversible fixed-point arithmetic gives polynomial gate complexity.

For each coordinate $i$, controlled on $t_i=1$, apply $U_i$ to a field register initially equal to zero.  Because $\widehat g_i(0)=0$, the resulting field value is nonzero exactly on the branch $t_i=1$.  Compute the predicate $\one\{y_i\ne0\}$ and xor it into $t_i$, clearing the whole support register.  Compute $\wt(y)$ reversibly, subtract it from the $k$ register, and then reverse the weight computation; this clears both the $k$ register and its counter workspace.  The amplitude of a resulting vector $y$ of weight $k$ is
\[
   \frac1{\sqrt{\abs{\Kcal}}}\binom mk^{-1/2}
   \prod_{i:y_i\ne0}\widehat g_i(y_i)
   =\frac{a_y}{\sqrt{Z_0}},
\]
where \cref{lem:Z0} was used.  Finally compute $Hy$ by reversible field arithmetic.  For finite precision, expand the range-state, Dicke-state, controlled local-state, and prime-QFT routines into a total of $G_{\mathrm{app}}=\poly(m,\log p,\log(1/\varepsilon))$ approximate primitive unitaries.  Synthesize each primitive to operator error at most $\varepsilon/G_{\mathrm{app}}$; the hybrid argument then bounds the total prepared-state error by $\varepsilon$.  Exact reversible arithmetic does not consume this error budget.
\end{proof}

\subsection{From Membership Access to Fourier-State Access}

Suppose the input is supplied through the standard random-access membership oracle
\begin{equation}
   O_S:\ket{i}\ket{a}\ket{b}
   \longmapsto
   \ket{i}\ket{a}\ket{b\oplus\one\{a\in S_i\}}.
   \label{eq:membership-oracle}
\end{equation}

\begin{proposition}[Balanced local-state preparation]
\label{prop:membership-to-fourier}
Assume $\rho\in[\rho_*,1-\rho_*]$ for a constant $\rho_*>0$ and that $\rho$ is known.  Given $O_S$ and $O_S^\dagger$, a controlled version of \cref{eq:local-fourier-state} can be implemented to operator-norm error at most $\varepsilon$ using $O_{\rho_*}(1)$ oracle calls and $\poly(\log p,\log(1/\varepsilon))$ additional gates.
\end{proposition}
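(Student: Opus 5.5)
The approach is to write the target local state as a prime-field Fourier transform of a real, explicitly known position-space state. That state is prepared by a rejection-free (exact amplitude amplification) procedure whose success probability is known in advance and does not depend on the set $S_i$. With the convention $\QFT_p\ket a=p^{-1/2}\sum_z\omega^{-az}\ket z$ and \cref{eq:fourier-convention}, we have
\[
   \QFT_p\Bigl(p^{-1/2}\sum_{a\in\F_p}g_i(a)\ket a\Bigr)=\sum_{z}\widehat g_i(z)\ket z .
\]
The state $\ket{\phi_i}:=p^{-1/2}\sum_a g_i(a)\ket a$ is normalized by \cref{eq:gi-properties}. By \cref{eq:gi}, its amplitude is $\alpha/\sqrt p$ on $S_i$ and $-\beta/\sqrt p$ off $S_i$, where $\alpha=\sqrt{(1-\rho)/\rho}$ and $\beta=\sqrt{\rho/(1-\rho)}$. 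Both are known constants bounded by $M:=\max(\alpha,\beta)=O_{\rho_*}(1)$. It therefore suffices to prepare $\ket{\phi_i}$ controlled on $i$ and then apply an approximate $\QFT_p$, which costs $\poly(\log p,\log(1/\varepsilon))$ gates.

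\textbf{Step 1.} Prepare $p^{-1/2}\sum_{a\in\F_p}\ket a$ inside $b=\lceil\log_2p\rceil$ qubits. Start from the uniform state on $2^b$ strings, mark $a<p$ by a reversible comparison, and run exact amplitude amplification. The marked weight $p/2^b\ge1/2$ is known classically, so a single (partial-angle) Grover iterate yields the target exactly, with ancillas returned to zero.

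\textbf{Step 2.} Query $O_S$ to write $\one\{a\in S_i\}$ into a bit $b'$. Apply the fixed rotation $\cos\theta_{b'}\ket0+\sin\theta_{b'}\ket1$ to a fresh qubit $c$, with $\cos\theta_1=\alpha/M$ and $\cos\theta_0=\beta/M$, followed by a $Z$-type phase giving the minus sign when $b'=0$. Then query again to uncompute $b'$. The $c=0$ branch is exactly $M^{-1}\ket{\phi_i}$. Its weight is $(\rho\alpha^2+(1-\rho)\beta^2)/M^2=1/M^2$, which is independent of the particular set because $|S_i|=\rho p$ with $\rho$ known.

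\textbf{Step 3.} Since the good-branch probability $1/M^2$ is known exactly, use exact amplitude amplification, reducing the final iterate's angle or diluting with an extra known rotation. This maps $\ket0\ket{0^{\mathrm{anc}}}$ to $\ket{\phi_i}\ket{0^{\mathrm{anc}}}$ exactly, using $O(M)=O_{\rho_*}(1)$ iterations. Each iteration uses two calls to $O_S$ or $O_S^\dagger$, plus reflections built from the Step 1 circuit. The result is a clean unitary, and controlling it on a control qubit and on the register $i$ only requires controlling each constituent gate, with $O_S$ already indexed by $i$.

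The main obstacle is the bookkeeping of precision and cleanliness rather than any conceptual step. The rotation angles $\theta_0,\theta_1$, the amplification angles, and the $\QFT_p$ are only synthesized approximately. I would allot each of the $O_{\rho_*}(1)+\poly(\log p)$ approximate primitives an error of $\varepsilon/G$ and use the hybrid argument, as in \cref{prop:shell-prep}, to bound the total operator-norm error by $\varepsilon$. The oracle calls and reversible comparisons are exact, and the ancilla-clearing in Steps 2 and 3 holds exactly in the ideal circuit. Consequently the only deviation from a clean implementation of \cref{eq:local-fourier-state} is this controlled synthesis error.
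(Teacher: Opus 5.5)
Your proposal is correct, and it reaches $\sum_z\widehat g_i(z)\ket z$ by a genuinely different decomposition from the paper's.

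The paper prepares the two uniform subset states $\ket{S_i}$ and $\ket{\overline S_i}$ by two separate exact amplitude amplifications, with marked fractions $\rho$ and $1-\rho$. It attaches them to a branch qubit in the state $\sqrt{1-\rho}\ket0-\sqrt\rho\ket1$. It then erases the branch qubit with one more oracle call, since the branch label is a function of membership. The known density enters through the two marked fractions and the branch amplitudes.

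You instead start from the uniform state on $\F_p$ and install the amplitudes $\alpha$ and $-\beta$ on a flagged branch by a compute--rotate--uncompute pair of queries. You then extract that branch by a single exact amplification. Its success probability $1/M^2=\min\{\rho/(1-\rho),(1-\rho)/\rho\}$ is independent of $S_i$ because $\rho\alpha^2+(1-\rho)\beta^2=1$.

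Your route needs only one amplification, and its good-subspace test costs no queries. So there are no phases of two different amplification circuits to reconcile across branches, which is the purpose of the paper's controlled one-qubit phase correction. The paper's route instead exhibits the target directly as the signed combination $\sqrt{1-\rho}\ket{S_i}-\sqrt\rho\ket{\overline S_i}$, and reads off the query cost $O(\rho^{-1/2}+(1-\rho)^{-1/2})$.

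Two details should be tightened.

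\begin{enumerate}
  \item Write $\mathcal A$ for your Steps~1--2. Each iterate $-\mathcal A R_0\mathcal A^{-1}R_{\mathrm{good}}$ applies both $\mathcal A$ and $\mathcal A^{-1}$, so it uses four oracle calls rather than two. This is still $O_{\rho_*}(1)$.
  \item More substantively, $U_i$ is used in controlled form in \cref{prop:shell-prep}. Suppose the exact amplification leaves a global phase $e^{i\gamma}$, as happens with a generalized-phase final iterate or if you drop the overall sign of the iterate. In the controlled implementation this becomes a factor $e^{i\gamma\wt(y)}$ on $a_y$. That changes the fiber sums $A_s$, because they mix several weights in $\Kcal$. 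Since $\gamma$ depends only on $\rho$, you can fix this in either of two ways. Use the dilution variant with the signed iterate, which lands exactly on the flagged state. Alternatively, cancel $\gamma$ with a fixed phase gate on the control qubit.
\end{enumerate}
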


\begin{proof}
Because every coordinate has the same known density $\rho$, the phase choices below are independent of $i$; the entire construction can therefore be controlled coherently by an index register.  Exact amplitude amplification with a known marked fraction prepares the uniform states
\[
   \ket{S_i}=\frac1{\sqrt{\rho p}}\sum_{a\in S_i}\ket{a},
   \qquad
   \ket{\overline S_i}=\frac1{\sqrt{(1-\rho)p}}\sum_{a\notin S_i}\ket{a}
\]
using $O(1/\sqrt\rho)$ and $O(1/\sqrt{1-\rho})$ membership queries, respectively \cite{BrassardEtAl2002}.  Prepare a branch qubit as
$\sqrt{1-\rho}\ket0-\sqrt\rho\ket1$ and, conditioned on the branch, prepare $\ket{S_i}$ or $\ket{\overline S_i}$.  The joint state is
\[
   \sum_{a\in S_i}\frac{\sqrt{1-\rho}}{\sqrt{\rho p}}\ket0\ket a
   -\sum_{a\notin S_i}\frac{\sqrt\rho}{\sqrt{(1-\rho)p}}\ket1\ket a.
\]
The branch is $0$ precisely on $S_i$ and $1$ precisely on its complement.  Using the branch as the oracle target maps both cases to $\ket1$; one Pauli $X$ then clears it to $\ket0$.  The remaining field register is
\[
   \frac1{\sqrt p}\sum_a g_i(a)\ket a.
\]
Applying the negative-phase QFT over $\mathbb Z_p$ gives \cref{eq:local-fourier-state} by \cref{eq:fourier-convention}.  Exact amplitude amplification uses efficiently specifiable phase rotations; approximating those rotations and the arbitrary-modulus QFT to total operator error $\varepsilon$ costs $\poly(\log p,\log(1/\varepsilon))$ gates \cite{HalesHallgren2000}.  The common density $\rho$ and the chosen phase schedule determine the branch phases.  A controlled one-qubit phase correction preserves the relative minus sign before the branches are combined.  The query count is constant because $\rho$ is bounded away from zero and one.
\end{proof}

\subsection{Polynomial Success Probability}

\begin{lemma}[Postselection probability]
\label{lem:norm-transfer}
Under the assumptions of \cref{thm:SW-import},
\begin{equation}
   \frac{Z_I}{Z_0}=1+O\!\left(\frac{m^Ce^{-cm}}{\sigma+1}\right)=1+o(1).
   \label{eq:norm-ratio}
\end{equation}
Consequently, for all sufficiently large $m$,
\begin{equation}
   p_{\mathrm{fib}}\ge\frac1{2L}.
   \label{eq:success-lower-bound}
\end{equation}
\end{lemma}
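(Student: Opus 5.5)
The plan is to combine three earlier results: the Parseval identity \cref{eq:ZI-parseval}, the denominator estimate \cref{eq:SW-denominator}, and the norm computation of \cref{lem:Z0}. By \cref{lem:fourier-expansion}, $Z_I=\E_{x\in\F_p^n}\abs{F(x)}^2$ holds exactly, with no approximation. Under the hypothesis \cref{eq:SW-condition}, or under its perturbed form from \cref{lem:SW-stability}, \cref{thm:SW-import} gives
\[
   Z_I=\abs{\Kcal}+O(m^Ce^{-cm}),
\]
uniformly over balanced input sets. \Cref{lem:Z0} gives $Z_0=\abs{\Kcal}=\sigma+1$ for all large $m$. Dividing these two identities yields
\[
   \frac{Z_I}{Z_0}=1+O\!\left(\frac{m^Ce^{-cm}}{\sigma+1}\right).
\]
Because $\sigma+1\ge1$, the error term is $O(m^Ce^{-cm})=o(1)$. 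This proves \cref{eq:norm-ratio}.

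For the success-probability bound, I would insert the ratio into the exact formula $p_{\mathrm{fib}}=Z_I/(LZ_0)$ from \cref{thm:exact-fiber-sum}. That theorem requires $Z_I>0$, and the ratio estimate supplies this for large $m$. Choose $m_0$ such that the $o(1)$ term is at most $1/2$ for all $m\ge m_0$. Then $Z_I/Z_0\ge 1/2$, and so $p_{\mathrm{fib}}\ge 1/(2L)$. The global padded length $L$ from \cref{eq:global-L} is a fixed quantity that does not depend on the syndrome, so it factors out without further work.

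The main point needing care is checking that the quantities being compared use the same normalization. The $Z_I$ here is defined through the fibers $\Lcal_s$, where weights are restricted to $\Kcal$ by the convention $u_k=0$ outside $\Kcal$. The $F$ in \cref{thm:SW-import} uses the same weights \cref{eq:weights}, and \cref{lem:fourier-expansion} establishes exactly this identification, so the two agree. A second point is that the constants $c,C$ must be uniform over input sets, so that a single $m_0$ works for every instance. The import theorem asserts this uniformity.

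A smaller point is the requirement $\abs{\Kcal}=\sigma+1$: it needs $\ell-\sigma\ge0$, which holds for large $m$ because $\ell$ grows linearly in $m$ while $\sigma$ grows only like $\log\log m$.
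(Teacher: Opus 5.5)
Your proposal is correct and follows the paper's proof. Both identify $Z_I=\E_x\abs{F(x)}^2$ through \cref{lem:fourier-expansion}, combine the denominator estimate \cref{eq:SW-denominator} with $Z_0=\abs{\Kcal}=\sigma+1$ from \cref{lem:Z0}, and substitute the resulting ratio into $p_{\mathrm{fib}}=Z_I/(LZ_0)$; your extra checks on positivity of $Z_I$, uniformity of $c,C$, and $\ell\ge\sigma$ are sound details the paper leaves implicit.
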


\begin{proof}
By \cref{lem:fourier-expansion}, $Z_I=\E_x\abs{F(x)}^2$.  Apply \cref{eq:SW-denominator} and \cref{lem:Z0}, then substitute into \cref{eq:success-exact}.
\end{proof}

Thus the required numbers of attempts in the ideal and finite-precision circuits are, respectively,
\[
   N_{\mathrm{ideal}}=\left\lceil2L\log(1/\beta)\right\rceil,
   \qquad
   N_{\mathrm{app}}=\left\lceil4L\log(1/\beta)\right\rceil.
\]
They produce a successful list-index projection with probability at least $1-\beta$, giving a standard bounded-error algorithm with restart.

\subsection{Finite-Precision Robustness}

\begin{lemma}[Conditioning stability]
\label{lem:conditioning-stability}
Let $\ket\psi$ and $\ket\phi$ be unit vectors with $\norm{\ket\psi-\ket\phi}\le\varepsilon$, and let $\Pi$ be a projector.  Write $p=\norm{\Pi\ket\psi}^2$ and $q=\norm{\Pi\ket\phi}^2$.  If $\varepsilon<\sqrt p$, then
\begin{align}
   \abs{p-q}&\le2\varepsilon,
   \label{eq:prob-stability}\\
   \norm{\frac{\Pi\ket\psi}{\sqrt p}-
          \frac{\Pi\ket\phi}{\sqrt q}}
      &\le\frac{2\varepsilon}{\sqrt p}.
   \label{eq:conditional-stability}
\end{align}
\end{lemma}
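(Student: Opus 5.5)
The plan is to work entirely with the amplitudes $\sqrt p=\norm{\Pi\ket\psi}$ and $\sqrt q=\norm{\Pi\ket\phi}$ instead of with the probabilities, since $\Pi$ is a contraction and the norm is $1$-Lipschitz. The first step is the reverse triangle inequality together with $\norm{\Pi}\le1$:
\[
   \abs{\sqrt p-\sqrt q}
   =\abs{\norm{\Pi\ket\psi}-\norm{\Pi\ket\phi}}
   \le\norm{\Pi(\ket\psi-\ket\phi)}
   \le\varepsilon .
\]
From this, \cref{eq:prob-stability} follows by factoring $p-q=(\sqrt p-\sqrt q)(\sqrt p+\sqrt q)$. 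Both $\sqrt p$ and $\sqrt q$ are at most $1$ because the vectors are unit vectors, so the second factor is at most $2$.

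The hypothesis $\varepsilon<\sqrt p$ is used only to ensure that the conditional state of $\ket\phi$ is well defined. The same inequality gives $\sqrt q\ge\sqrt p-\varepsilon>0$, hence $q>0$.

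For \cref{eq:conditional-stability}, I would insert the intermediate vector $\Pi\ket\phi/\sqrt p$ and split
\[
   \frac{\Pi\ket\psi}{\sqrt p}-\frac{\Pi\ket\phi}{\sqrt q}
   =\frac{\Pi(\ket\psi-\ket\phi)}{\sqrt p}
    +\Pi\ket\phi\left(\frac1{\sqrt p}-\frac1{\sqrt q}\right).
\]
The first term has norm at most $\varepsilon/\sqrt p$. Since $\norm{\Pi\ket\phi}=\sqrt q$, the second term has norm
\[
   \sqrt q\,\abs{\frac1{\sqrt p}-\frac1{\sqrt q}}=\frac{\abs{\sqrt q-\sqrt p}}{\sqrt p},
\]
which is again at most $\varepsilon/\sqrt p$ by the first step. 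The triangle inequality then gives $2\varepsilon/\sqrt p$.

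No step is a real obstacle. The only point needing care is choosing the normalization of the intermediate vector: dividing by $\sqrt p$ rather than $\sqrt q$ is what makes the final bound depend only on the known quantity $p$, and it avoids any lower bound on $q$ beyond positivity.
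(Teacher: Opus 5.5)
Your proposal is correct and follows essentially the same route as the paper: the reverse triangle inequality on $\norm{\Pi\ket\psi}$ and $\norm{\Pi\ket\phi}$, then the same split through the intermediate vector $\Pi\ket\phi/\sqrt p$. Your explicit remark that $\varepsilon<\sqrt p$ forces $\sqrt q\ge\sqrt p-\varepsilon>0$, so that the conditional state of $\ket\phi$ is well defined, is a small detail that the paper leaves implicit.
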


\begin{proof}
The reverse triangle inequality gives
$\abs{\sqrt p-\sqrt q}\le\norm{\Pi(\ket\psi-\ket\phi)}\le\varepsilon$.
Since $\sqrt p,\sqrt q\le1$, this implies \cref{eq:prob-stability}.  For the normalized vectors,
\begin{align*}
  \norm{\frac{\Pi\ket\psi}{\sqrt p}-
         \frac{\Pi\ket\phi}{\sqrt q}}
  &\le
    \frac{\norm{\Pi(\ket\psi-\ket\phi)}}{\sqrt p}
    +\sqrt q\abs{\frac1{\sqrt p}-\frac1{\sqrt q}}\\
  &\le\frac{\varepsilon}{\sqrt p}
       +\frac{\abs{\sqrt q-\sqrt p}}{\sqrt p}
   \le\frac{2\varepsilon}{\sqrt p}.
\end{align*}
\end{proof}

\begin{corollary}[Finite-precision implementation]
\label{cor:finite-precision}
Under \cref{eq:success-lower-bound}, an implementation whose premeasurement state is within
\begin{equation}
   \varepsilon\le
   \min\left\{\frac1{8L},\frac{\eta}{4\sqrt{2L}}\right\}
   \label{eq:precision-choice}
\end{equation}
of the ideal state immediately before the fiber-projection measurement, and whose final inverse QFT has operator error at most $\eta/2$, has success probability at least $1/(4L)$ and, conditioned on success, produces an output distribution at total variation distance at most $\eta$ from $P_u$.
\end{corollary}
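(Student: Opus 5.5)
The plan is to apply \cref{lem:conditioning-stability} to the premeasurement state. There $\ket\psi$ is the ideal state immediately before the fiber-projection measurement, $\ket\phi$ is the implemented state, and $\Pi$ projects onto $J=0^q$, $Q=1$ after the Hadamards. We may absorb $\Had^{\otimes q}$ into the state, since a unitary preserves distances.

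\textbf{Success probability.} By \cref{eq:success-lower-bound}, $p=p_{\mathrm{fib}}\ge 1/(2L)$. Since $\varepsilon\le 1/(8L)<1/\sqrt{2L}\le\sqrt p$, the hypothesis of \cref{lem:conditioning-stability} holds. Then \cref{eq:prob-stability} gives
\[
   q\ge p-2\varepsilon\ge \frac1{2L}-\frac1{4L}=\frac1{4L}.
\]

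\textbf{Conditional state.} The conditional post-measurement states are $\Pi\ket\psi/\sqrt p$ and $\Pi\ket\phi/\sqrt q$. The ideal one equals $\ket{0}\ket{\widehat I}\ket{0^q}\ket1$ by \cref{thm:exact-fiber-sum}, together with cleared ancillas. By \cref{eq:conditional-stability} the two conditional states are within
\[
   \frac{2\varepsilon}{\sqrt p}\le 2\varepsilon\sqrt{2L}\le \frac{\eta}{2}
\]
in Euclidean norm.

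The implemented conditional state may carry junk in the other registers. This is harmless: the inverse QFT acts only on the syndrome register, and we measure only $x$. The outcome distribution is the marginal of a measurement of the full pure state.

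Next, compare the ideal inverse QFT applied to the ideal state with the approximate inverse QFT $\widetilde U$ applied to the implemented conditional state. The triangle inequality gives
\[
   \norm{\widetilde U\phi'-U\psi'}\le\norm{(\widetilde U-U)\phi'}+\norm{U(\phi'-\psi')}.
\]
The first term is at most $\eta/2$. There is one subtlety: the operator-norm guarantee is stated on $\mathcal H_p^{\otimes n}$, and $\phi'$ may have small weight outside the valid encodings. I would handle this by bounding the operator error on the whole space if the synthesized unitary permits, or by noting that the extension is arbitrary but unitary. In the latter case, components outside $\mathcal H_p$ are already counted in $\norm{\phi'-\psi'}$, and I would fold that cost into the bound. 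This is the step where the constants are tightest, so I would check it carefully.

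\textbf{Output distribution.} The two final unit vectors are within $\eta$. The standard inequality $\TV\le\norm{\ket a-\ket b}$ for computational-basis measurements of unit vectors then bounds the total-variation distance by $\eta$, and marginalizing over the unmeasured registers cannot increase it. By \cref{lem:fourier-expansion}, the ideal distribution is exactly $P_u$. The main obstacle is the bookkeeping of the invalid-encoding subspace and the ancilla junk; the probabilistic part is immediate from \cref{lem:conditioning-stability}.
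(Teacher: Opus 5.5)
Your argument is the paper's proof: \cref{lem:conditioning-stability} with $p\ge 1/(2L)$ gives success probability at least $1/(4L)$ and conditional states within $2\varepsilon\sqrt{2L}\le\eta/2$. Adding the $\eta/2$ QFT error by the triangle inequality and using contraction under measurement then gives total variation at most $\eta$.

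The invalid-encoding subtlety you leave open disappears if you split the triangle inequality the other way, $\norm{\widetilde U\phi'-U\psi'}\le\norm{\widetilde U(\phi'-\psi')}+\norm{(\widetilde U-U)\psi'}$. The first term equals $\norm{\phi'-\psi'}\le\eta/2$ because $\widetilde U$ is unitary on the whole space. The second is at most $\eta/2$ because the ideal state lies in $\mathcal H_p^{\otimes n}$ tensored with the other registers. Your fallback of folding the out-of-subspace weight into the bound would instead generally lose a constant factor, so you would not recover exactly $\eta$.
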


\begin{proof}
By \cref{lem:conditioning-stability} and $p\ge1/(2L)$, the success probability decreases by at most $2\varepsilon\le1/(4L)$, while the Euclidean distance between successful pure states before the final QFT is at most $2\varepsilon\sqrt{2L}\le\eta/2$.  The synthesized inverse QFT adds at most $\eta/2$ in Euclidean norm.  The triangle inequality, followed by contraction of trace distance under measurement, gives output total variation distance at most $\eta$.
\end{proof}

All required ideal rotations, local state preparations, and QFTs can be synthesized to inverse-polynomial precision, so \cref{eq:precision-choice} incurs only polynomial overhead.  The list decoder, canonical sorting, comparisons, and finite-field arithmetic are discrete reversible computations and introduce no approximation error.

\section{Main Algorithmic Theorem}

\begin{theorem}[Efficient Sun--Wootters sampler below the Johnson radius]
\label{thm:main-sampler}
Consider a balanced prime-field OPI family with $n/m\to r=2\mu\in(0,1)$ and $\rho_m=1/2+o(1)$.  Let $\delta\in(0,1/2-\mu]$ and $\lambda$ satisfy the Sun--Wootters condition \cref{eq:SW-condition}, and fix accuracy parameters $\eta,\beta\in(0,1/2)$.  Suppose also that
\begin{equation}
   \mu+\delta<1-\sqrt{1-r}.
   \label{eq:asymptotic-Johnson}
\end{equation}
Assume that $\delta$ is a fixed rational number.  For each family member put
\begin{equation}
   \mu_m:=\frac{n}{2m},
   \qquad
   \ell_m:=\min\left\{
      \left\lfloor\frac n2+\delta m\right\rfloor,
      \left\lfloor\frac m2\right\rfloor
   \right\},
   \label{eq:computable-shell-family}
\end{equation}
and define $P_{u,m}$ by \cref{eq:shell,eq:weights,eq:Pu} with largest active weight $\ell_m$.  Then, with membership-oracle access \cref{eq:membership-oracle}, there is a uniform bounded-error quantum algorithm running in
\begin{equation}
   \poly(m,\log p,\log(1/\eta),\log(1/\beta))
\end{equation}
time and outputting either a failure symbol $\bot$ or a vector $x\in\F_p^n$ such that
\begin{equation}
   \Prb[x=\bot]\le\beta,
   \qquad
   \TV\!\left(\mathcal L(x\mid x\ne\bot),P_{u,m}\right)\le\eta.
   \label{eq:sampler-failure-semantics}
\end{equation}
If $\bot$ is replaced by any fixed default vector, the unconditional output law is within total variation distance $\eta+\beta$ of $P_{u,m}$, and hence
\begin{equation}
   \E[s(x)]\ge\SCL_{1/2}(\mu+\delta)-o(1)-\eta-\beta.
   \label{eq:algorithm-score}
\end{equation}
\end{theorem}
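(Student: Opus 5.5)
The proof assembles the components already established, so the plan is mostly bookkeeping. The one genuinely new ingredient is checking that the finite-$m$ parameters $(\mu_m,\ell_m)$ meet both the finite Johnson condition \cref{eq:finite-Johnson} and the hypotheses of \cref{lem:SW-stability}.

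\textbf{Parameters.} Since $\delta\le 1/2-\mu$, the cap $\lfloor m/2\rfloor$ in \cref{eq:computable-shell-family} only affects $\ell_m$ by $O(1)+o(m)$. Combined with $n/m\to r$, this gives $\ell_m/m\to\mu+\delta$ and $\mu_m=\mu+o(1)$. Then \cref{lem:SW-stability} yields \cref{eq:SW-denominator,eq:SW-score} for $P_{u,m}$ with uniform constants.

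For the Johnson condition, \cref{eq:asymptotic-Johnson} says $1-(\mu+\delta)>\sqrt{1-r}$ with a fixed gap. Multiplying by $m$, we get $m-\ell_m\ge m(1-\mu-\delta)-o(m)$. On the other side, $\sqrt{(m-n-1)m}=m\sqrt{1-r}+o(m)$. So \cref{eq:finite-Johnson} holds for all large $m$ with linear slack, and finitely many small $m$ can be handled by brute force within the stated complexity. Because $\delta$ is rational, $\ell_m$, $\sigma$, $\Kcal$, $B_\ell$, $L$, the multipliers $\nu_i$ and the inverse $R$ are all computable classically in polynomial time, which gives uniformity. By \cref{lem:explicit-list-bound}, $L=O(1)$, and in any case $L<2m^2$.

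\textbf{Algorithm and cost.} Each attempt runs the four steps of the algorithm statement.
\begin{enumerate}[label=\textup{(\arabic*)},leftmargin=2.2em]
  \item \Cref{prop:membership-to-fourier} (with $\rho_*=1/4$ for large $m$) supplies Fourier-state access with $O(1)$ oracle calls.
  \item \Cref{prop:shell-prep} prepares $\ket{\Psi_0}$.
  \item \Cref{lem:index-unitary} combined with \cref{cor:dual-list-decoding} gives the indexer in $\poly(m,\log p)$ time.
  \item The Hadamard step is immediate, and the approximate inverse QFT is synthesized by \cite{HalesHallgren2000}.
\end{enumerate}
The membership-to-Fourier error is absorbed because each $U_i$ is used $O(m)$ times, so it suffices to set each synthesis precision to $\varepsilon/\poly(m)$. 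Choose $\varepsilon$ as in \cref{eq:precision-choice} and run the inverse QFT to error $\eta/2$. All of this costs $\poly(m,\log p,\log(1/\eta))$ per attempt.

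\textbf{Correctness and failure semantics.} By \cref{thm:exact-fiber-sum} and \cref{lem:norm-transfer}, the ideal success probability satisfies $p_{\mathrm{fib}}\ge 1/(2L)$. \Cref{cor:finite-precision} then gives success probability at least $1/(4L)$ per attempt, and conditional total variation error at most $\eta$. Repeating $N_{\mathrm{app}}=\lceil 4L\log(1/\beta)\rceil$ independent attempts and outputting $\bot$ if all fail gives $\Prb[x=\bot]\le(1-1/(4L))^{N_{\mathrm{app}}}\le\beta$. Conditioned on $x\ne\bot$, the output is the first success, which has the conditional law, so \cref{eq:sampler-failure-semantics} holds.

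Now replace $\bot$ by a fixed default vector. Then $\TV(\mathcal L(x),\mathcal L(x\mid x\ne\bot))\le\Prb[x=\bot]\le\beta$, and the triangle inequality gives total distance at most $\eta+\beta$. Since $0\le s\le 1$, we have $\abs{\E_Ps-\E_Qs}\le\TV(P,Q)$. Combining this with \cref{eq:SW-score} through \cref{lem:SW-stability} yields \cref{eq:algorithm-score}.

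\textbf{Main obstacle.} The delicate step is verifying the parameter transfer. First, the capped, floored $\ell_m$ must still satisfy the strict limiting Sun--Wootters margin; this uses $\mu+\delta\le 1/2$. Second, the margins must hold uniformly over all balanced inputs, so that both $L$ and the $1+o(1)$ denominator estimate are independent of the sets $S_i$. I would handle this by making the $o(m)$ rounding explicit and appealing to the worst-case uniformity stated in \cref{thm:SW-import}.
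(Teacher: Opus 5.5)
Your proposal is correct and follows the paper's proof essentially step for step. The shared steps are: parameter convergence and the finite Johnson inequality for large $m$; the stability lemma; complete list decoding with $L=O(1)$; state preparation; the exact fiber-sum theorem with $p_{\mathrm{fib}}\ge 1/(2L)$; the finite-precision corollary giving $1/(4L)$; $\lceil 4L\log(1/\beta)\rceil$ restarts; and the same total-variation and score bookkeeping.

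Drop the aside that finitely many small $m$ can be ``handled by brute force within the stated complexity.'' Nothing available controls the list size or the postselection probability $Z_I/(LZ_0)$ at small $m$, where $p$ may be huge. The aside is also unnecessary, because the paper's argument, like the $o(1)$ terms in the statement, concerns only all sufficiently large $m$.
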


\begin{proof}
We have $\mu_m\to\mu$ and $\ell_m/m\to\mu+\delta$: the cap is eventually inactive when $\mu+\delta<1/2$, and both entries in the minimum converge to $1/2$ at the endpoint.  In particular, $\ell_m\le m/2$ at every finite length, as required by the balanced Sun--Wootters estimate.  The strict exponent margin survives by \cref{lem:SW-stability}.  Condition \eqref{eq:asymptotic-Johnson} has constant slack, so the finite inequality \cref{eq:finite-Johnson} holds for all sufficiently large $m$.  By \cref{cor:dual-list-decoding,lem:explicit-list-bound}, the complete canonical fiber list is computable in polynomial time, $L<2m^2$, and in fact $L=O_{r,\delta}(1)$.  By \cref{prop:membership-to-fourier,prop:shell-prep}, the labeled state is preparable in polynomial time to any inverse-polynomial error.  By \cref{thm:exact-fiber-sum}, a successful joint projection $(J,Q)=(0^q,1)$ produces the exact ideal state in the ideal circuit.  By \cref{lem:norm-transfer}, the ideal attempt succeeds with probability at least $1/(2L)$; after approximation, \cref{cor:finite-precision} gives success at least $1/(4L)$.  Run
$N=\lceil4L\log(1/\beta)\rceil$ independent attempts, output the first successful sample, and output $\bot$ only if all $N$ attempts fail.  Since $(1-1/(4L))^N\le e^{-N/(4L)}\le\beta$, the failure bound in \cref{eq:sampler-failure-semantics} follows.  Conditioned on at least one success, the first successful attempt has the same successful-branch law as a single attempt, and \cref{cor:finite-precision} bounds that law by $\eta$ in total variation distance from $P_{u,m}$.

Replacing $\bot$ by a fixed vector changes the conditional successful law by mixing in mass at most $\beta$, so the resulting unconditional law is within $\eta+\beta$ of $P_{u,m}$.  Equation \eqref{eq:algorithm-score} follows from \cref{eq:SW-score,lem:SW-stability}, because a function taking values in $[0,1]$ changes its expectation by at most total variation distance.
\end{proof}

\begin{corollary}[High-probability optimization output]
\label{cor:high-probability-score}
Under \cref{thm:main-sampler}, for every fixed $\varepsilon>0$ and $\beta\in(0,1)$, polynomially many independent samples followed by classical score evaluation output, with probability at least $1-\beta$, an $x$ satisfying
\begin{equation}
   s(x)\ge\SCL_{1/2}(\mu+\delta)-\varepsilon-o(1).
\end{equation}
\end{corollary}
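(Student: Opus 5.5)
The proof reduces to a concentration argument for the score of a single sample, followed by independent repetition and selection of the best candidate. Run the sampler of \cref{thm:main-sampler} with accuracy parameters $\eta$ and $\beta'$ chosen as small fixed constants relative to $\varepsilon$, say $\eta=\beta'=\varepsilon/8$. Replace any $\bot$ output by a fixed default vector. Each output $x$ then satisfies $\E[s(x)]\ge\SCL_{1/2}(\mu+\delta)-o(1)-\varepsilon/4$ by \cref{eq:algorithm-score}. Because $s(x)$ can be evaluated classically with $m$ membership queries (or coherent oracle calls on computational basis inputs), we can draw $N$ independent samples, compute every score, and output the maximizer.

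The key step is to turn an expectation bound into a constant probability that one sample scores well. Write $\tau:=\SCL_{1/2}(\mu+\delta)-\varepsilon-o(1)$ and $\theta:=\Prb[s(x)\ge\tau]$. Since $s\le1$, Markov's inequality applied to $1-s(x)$ gives
\[
   \E[s(x)]\le \theta\cdot1+(1-\theta)\tau ,
\]
so $\theta\ge(\E[s(x)]-\tau)/(1-\tau)\ge(3\varepsilon/4)/(1-\tau)\ge 3\varepsilon/4$ once the $o(1)$ terms are aligned. The $o(1)$ term inside $\tau$ should be taken to be the same vanishing sequence that appears in \cref{eq:algorithm-score}, so that the gap $3\varepsilon/4$ is exact. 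We may assume $\tau<1$; otherwise the claim about $s(x)\ge\tau$ is handled trivially by interpreting it as vacuous or by using the same bound with $\tau$ capped.

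Independence across the repeated runs then gives
\[
   \Prb[\text{all }N\text{ samples have }s<\tau]\le(1-3\varepsilon/4)^N\le\beta
\]
for $N=\lceil (4/(3\varepsilon))\log(1/\beta)\rceil$. This is $O(1)$ for fixed $\varepsilon$ and $\beta$, and hence polynomially many sampler calls. The total cost is therefore $N$ times the polynomial running time of \cref{thm:main-sampler} plus $O(Nm\log p)$ for classical scoring. The step that needs the most care is this one: the $o(1)$ bookkeeping in the Markov step, where we must ensure the expectation gap is a fixed constant independent of $m$. This holds because $\eta$ and $\beta'$ are fixed fractions of $\varepsilon$ rather than vanishing quantities.
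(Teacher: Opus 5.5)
Your proposal is correct and follows the paper's proof. Both fix constant accuracy parameters so that one invocation has expected score within a constant fraction of $\varepsilon$ of $\SCL_{1/2}(\mu+\delta)$ up to $o(1)$. Both then use $\E[s(x)]\le\theta+(1-\theta)\tau$ to get a constant per-sample success probability, and amplify by independent repetition with classical scoring and best-score selection. The only difference is bookkeeping: you keep the vanishing term inside the threshold to get $\theta\ge 3\varepsilon/4$ directly, whereas the paper thresholds at $\SCL_{1/2}(\mu+\delta)-\varepsilon$ and absorbs the $o(1)$ for large $m$, obtaining $\theta\ge(\varepsilon/3)/(1-\SCL_{1/2}(\mu+\delta)+\varepsilon)$.
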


\begin{proof}
Let $t=\SCL_{1/2}(\mu+\delta)$.  Choose $\eta$ and the per-call failure probability so that, after replacing a possible failure by a default vector, one invocation has expectation at least $t-\varepsilon/2-o(1)$.  If $P$ is the probability that $s(x)\ge t-\varepsilon$, then
\[
   \E[s(x)]\le P+(1-P)(t-\varepsilon),
\]
so for large $m$,
\[
   P\ge\frac{\varepsilon/3}{1-t+\varepsilon},
\]
a positive constant depending only on the fixed parameters.  Repetition and selection of the best observed score amplify this to $1-\beta$.  Each score is evaluated with $m$ membership-oracle calls and polynomial-time finite-field arithmetic.
\end{proof}

\section{From Existence to a Strict Algorithmic Separation}

The main theorem requires the target radius to be below the Johnson radius.  We now show that this restriction never prevents a \emph{strict} improvement when the Sun--Wootters exponent condition permits one.

\begin{lemma}[Strict gap between unique and Johnson radii]
\label{lem:johnson-gap}
For $r\in(0,1)$ and $\mu=r/2$,
\begin{equation}
   1-\sqrt{1-r}-\mu
   =\frac{(1-\sqrt{1-r})^2}{2}>0.
   \label{eq:johnson-gap}
\end{equation}
\end{lemma}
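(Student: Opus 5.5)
This is a direct algebraic identity, so I will verify it by substitution and then read off positivity. Put $s:=\sqrt{1-r}$, which lies in $(0,1)$ because $r\in(0,1)$. Then $r=1-s^2$, and hence $\mu=r/2=(1-s^2)/2$.

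The left-hand side of \cref{eq:johnson-gap} becomes
\[
   1-s-\frac{1-s^2}{2}=\frac{2-2s-1+s^2}{2}=\frac{(1-s)^2}{2}.
\]
This is exactly $(1-\sqrt{1-r})^2/2$, which proves the identity.

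For the strict inequality, note that $s<1$ since $r>0$. So $1-s\neq0$, and its square divided by two is strictly positive.

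No step is a genuine obstacle. The only point to check is that the endpoints are excluded: $r>0$ gives strictness, and $r<1$ keeps $s$ real and nonnegative, although the identity needs only $r\le1$.

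It is also worth recording the interpretation, since it is what the later transfer argument uses. The Johnson radius $1-\sqrt{1-r}$ from \cref{eq:asymptotic-Johnson} exceeds the DQI cutoff $\mu$ by the fixed positive constant $(1-\sqrt{1-r})^2/2$. Consequently, any sufficiently small $\delta>0$ satisfies the Johnson condition \cref{eq:asymptotic-Johnson} with constant slack.
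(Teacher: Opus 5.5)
Your proof is correct and is essentially the paper's argument: both substitute $t=\sqrt{1-r}$, write $r=1-t^2$, and simplify $1-t-(1-t^2)/2$ to $(1-t)^2/2$. You also note explicitly that $r>0$ forces $\sqrt{1-r}<1$, which gives the strict positivity the paper leaves implicit.
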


\begin{proof}
Put $t=\sqrt{1-r}$.  Then $r=1-t^2$ and
\[
   1-t-\frac{1-t^2}{2}=\frac{(1-t)^2}{2}.
\]
\end{proof}

\begin{lemma}[Monotonicity of the feasibility exponent]
\label{lem:E-monotone}
For fixed $\mu\in(0,1/2)$, the function $\delta\mapsto E(\mu,\delta)$ is strictly increasing on $0<\delta<1/2-\mu$.
\end{lemma}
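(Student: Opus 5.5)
The plan is to differentiate $E(\mu,\delta)$ in $\delta$ directly, after rewriting each entropy term as a combination of functions of the form $t\log t$. The constant $-h(2\mu)$ does not depend on $\delta$, so only the first two terms of \cref{eq:E-exponent} matter. Put $a=\mu+\delta$, so that $a\in(\mu,1/2)$ exactly when $0<\delta<1/2-\mu$.

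For $0<\mu<b$, expanding $h(\mu/b)$ gives the identity
\[
   b\,h\!\left(\frac{\mu}{b}\right)=b\log b-\mu\log\mu-(b-\mu)\log(b-\mu).
\]
On the stated interval $\mu<a$ and $\mu<1-a$, since $a<1/2<1-\mu$. So the identity applies to both terms, with $b=a$ and with $b=1-a$. All the entropy arguments then lie strictly inside $(0,1)$, and the expression is smooth in $a$.

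Next compute the derivative. Using $\tfrac{d}{db}(b\log b)=\log b+1$, the first term contributes $\log\bigl(a/(a-\mu)\bigr)$. The second term, after the chain-rule sign from $b=1-a$, contributes $\log\bigl((1-a-\mu)/(1-a)\bigr)$. The $+1$'s cancel in each term. Therefore
\[
   \frac{\partial E}{\partial\delta}
   =\log\frac{a(1-a-\mu)}{(a-\mu)(1-a)}.
\]

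The final step, which is the only real content, is the sign check. Expanding,
\[
   a(1-a-\mu)-(a-\mu)(1-a)=\mu(1-2a).
\]
This is strictly positive because $\mu>0$ and $a<1/2$. Both factors $(a-\mu)$ and $(1-a)$ are positive, so the ratio inside the logarithm exceeds $1$ and the derivative is strictly positive on the open interval. Strict monotonicity then follows from the mean value theorem.

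I do not expect a genuine obstacle. The only care needed is to confirm that the open $\delta$-interval keeps every logarithm argument positive, which the interval $\mu<a<1/2$ guarantees.
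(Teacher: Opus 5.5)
Your proof is correct and follows the paper's argument: you differentiate directly, and with $a=\mu+\delta$ your ratio $\frac{a(1-a-\mu)}{(a-\mu)(1-a)}$ is exactly the paper's $\frac{(\mu+\delta)(1-2\mu-\delta)}{\delta(1-\mu-\delta)}$, while your sign identity $\mu(1-2a)$ is the paper's $\mu(1-2\mu-2\delta)>0$. The only differences are cosmetic: you expand $b\,h(\mu/b)$ into $t\log t$ terms instead of using $h'(x)=\log(1/x-1)$, and you check explicitly that every logarithm argument stays positive on the open interval, which the paper leaves implicit.
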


\begin{proof}
Differentiating \cref{eq:E-exponent} and using $h'(x)=\log(1/x-1)$ gives
\begin{equation}
   \frac{\partial E}{\partial\delta}
   =\log\!\left(
      \frac{(\mu+\delta)(1-2\mu-\delta)}
           {\delta(1-\mu-\delta)}
      \right).
   \label{eq:E-derivative}
\end{equation}
The numerator inside the logarithm exceeds the denominator because their difference is
\[
   (\mu+\delta)(1-2\mu-\delta)
   -\delta(1-\mu-\delta)
   =\mu(1-2\mu-2\delta)>0.
\]
\end{proof}

\begin{lemma}[Endpoint bounds]
\label{lem:endpoint-bounds}
The two parameter choices used below satisfy
\begin{align}
 E\!\left(\frac{249}{800},10^{-6}\right)
 +G\!\left(\frac{249}{800},\frac{3429}{20000}\right)
 &< -\frac1{10000},
 \label{eq:base-rate-bound}\\
 E\!\left(\frac38,\frac18\right)
 +G\!\left(\frac38,\frac{99}{250}\right)
 &< -\frac1{2000}.
 \label{eq:three-quarter-bound}
\end{align}
\end{lemma}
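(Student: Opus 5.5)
This is a finite numerical verification, so I will evaluate the two left-hand sides with rigorous error control rather than argue structurally. First I confirm that every entropy argument lies strictly inside $(0,1)$.

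For \cref{eq:base-rate-bound}, take $\mu=249/800$, so $2\mu=0.6225$, $4\mu-1=0.245$ and $2-4\mu=0.755$. With $\lambda=3429/20000=0.17145$, the arguments are $\lambda/(4\mu-1)\approx0.6998$ and $(2\mu-\lambda)/(2-4\mu)\approx0.5974$.

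For \cref{eq:three-quarter-bound}, take $\mu=3/8$ and $\delta=1/8$, so $\mu+\delta=1/2$. Then $E(3/8,1/8)=\tfrac12h(3/4)+\tfrac12h(3/4)-h(3/4)=0$. With $4\mu-1=1/2$, $2-4\mu=1/2$ and $\lambda=0.396$, the arguments are $0.792$ and $(0.75-0.396)/0.5=0.708$.

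In the first case $E(\mu,10^{-6})$ is tiny but not negligible. By \cref{eq:E-derivative}, $\partial E/\partial\delta\approx\log(\mu/\delta)$ near $\delta=0$, so $E\approx\delta\log(e\mu/\delta)\approx 1.3\times10^{-5}$. I will bound it explicitly, either by a rigorous evaluation or by integrating the derivative bound $\log((\mu+\delta)(1-2\mu)/(\delta(1-2\mu-2\delta)))$.

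Next I compute $G$ term by term: $(1-2\mu)\log 2$, $h(2\mu)$, the two weighted entropies, and $\lambda\log(2/\pi)$. I will use interval arithmetic with enough digits that each term is certified to within about $10^{-7}$. In the second case the verification reduces to checking $G(3/8,0.396)<-1/2000$ directly:
\[
G(3/8,0.396)=\tfrac14\log2+h(3/4)-\tfrac12h(0.792)-\tfrac12h(0.708)+0.396\log(2/\pi).
\]
I will evaluate this with certified bounds on the logarithms; since the bound involves only natural logarithms of rationals, a short table of enclosures suffices. I will proceed the same way for the first case, summing enclosures and checking that the upper endpoint is below $-10^{-4}$.

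The main obstacle is that the margins are small relative to the individual terms, which are of order $0.1$–$0.7$. Cancellation therefore has to be controlled to roughly five significant digits, and a sloppy hand estimate will not do. I will present the computation as an explicit table of rational enclosures for each logarithm, obtained from standard series or a verified computer algebra evaluation, and then sum them with outward rounding. If the certified margin in \cref{eq:base-rate-bound} comes out tighter than expected, I will fall back to higher-precision interval evaluation. In either case I will state the numerical values to, say, eight digits so the reader can reproduce them.
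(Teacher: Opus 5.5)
Your overall method (certified rational enclosures for each logarithm, summed with outward rounding) is the same as the paper's. Your treatment of \eqref{eq:three-quarter-bound} is correct: $E(3/8,1/8)=0$, and everything reduces to $G(3/8,99/250)<-1/2000$.

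\textbf{The gap: $E(249/800,10^{-6})$ is not tiny.} Your analysis of \eqref{eq:base-rate-bound} rests on a false premise.
\begin{itemize}
\item The exponent $E(\mu,\delta)$ vanishes at $\delta=1/2-\mu$, which is why the second case is clean. It does not vanish at $\delta=0$: in fact $E(\mu,0)=2\mu\log 2-h(\mu)$, which at $\mu=249/800$ is about $-0.18861$.
\item Hence $E(249/800,10^{-6})\approx-0.18860$. Your value $1.3\times10^{-5}$ is only the increment $E(\mu,10^{-6})-E(\mu,0)$.
\item Meanwhile $G(249/800,3429/20000)\approx+0.18846$ is positive. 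So \eqref{eq:base-rate-bound} is a near-cancellation between $E$ and $G$, with sum about $-1.419\times10^{-4}$. It is not a statement that $G$ alone lies below $-10^{-4}$.
\item If you integrate the derivative bound from $\delta=0$ starting from the value zero, as your sketch suggests, you will get $E+G\approx+0.1885$ and the verification fails.
\end{itemize}

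\textbf{The repair.} Enclose $E(\mu,0)=2\mu\log2-h(\mu)$ with the same precision as the terms of $G$; alternatively, enclose the three entropy terms of $E(\mu,10^{-6})$ directly. Then add your integrated derivative bound for the $10^{-6}$ shift. The certified slack below $-10^{-4}$ is only about $4\times10^{-5}$, so this term needs as much care as the others. Your $10^{-7}$ per-term target is adequate once the terms are correctly identified.

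\textbf{A smaller point.} The term $\lambda\log(2/\pi)$ is not a logarithm of a rational, so your table of log enclosures must be supplemented by a certified enclosure of $\pi$. The paper uses Machin's identity with signed alternating-series remainders. A classical bound such as $3.14159<\pi<3.1416$ already costs less than $2\times10^{-6}$ in either inequality, so this is easy, but it must be stated.
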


\begin{proof}
We bound every logarithm and $\pi$ by rational intervals.  For a rational
$y\in[1,2]$, set $z=(y-1)/(y+1)$.  The positive series for
$2\operatorname{arctanh}z$ gives, for every $N\ge1$,
\begin{equation}
  2\sum_{j=0}^{N-1}\frac{z^{2j+1}}{2j+1}
  \le \log y
  \le
  2\sum_{j=0}^{N-1}\frac{z^{2j+1}}{2j+1}
  +\frac{2z^{2N+1}}{(2N+1)(1-z^2)}.
  \label{eq:log-rational-bound}
\end{equation}
Writing any positive rational $x$ as $x=2^k y$ with $y\in[1,2)$ and applying
\cref{eq:log-rational-bound} to $y$ and to $2$ gives rational bounds for
$\log x$.  All logarithm arguments in the two evaluations are rational except
for $2/\pi$.  We bound this remaining argument by first enclosing $\pi$ through
Machin's identity
\[
   \frac\pi4=4\arctan\frac15-\arctan\frac1{239},
\]
using the alternating Taylor series with its signed next-term remainder.
Taking $N=100$ in \cref{eq:log-rational-bound}, $40$ terms for
$\arctan(1/5)$, and $16$ terms for $\arctan(1/239)$, and propagating the
resulting intervals with the appropriate signs through the definitions of
$E$ and $G$, gives the rational upper bounds
\[
 -1.41885473\times10^{-4}
 \quad\text{and}\quad
 -8.18554771\times10^{-4},
\]
respectively.  They are smaller than $-1/10000$ and $-1/2000$, proving
\cref{eq:base-rate-bound,eq:three-quarter-bound}.
\end{proof}

\begin{lemma}[Feasibility throughout the rate interval]
\label{lem:global-feasibility}
For every
\[
   \frac{249}{800}\le \mu<\frac12,
\]
there are parameters $\delta_0\in(0,1/2-\mu]$ and $\lambda$ in the entropy domain of $G$ such that
\[
   E(\mu,\delta_0)+G(\mu,\lambda)<0.
\]
\end{lemma}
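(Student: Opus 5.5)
The plan is to split the rate interval at $\mu=3/8$. On $[3/8,1/2)$ I would take the maximal choice $\delta_0=1/2-\mu$, and on $[249/800,3/8]$ I would use a finite certified grid anchored at the two endpoint bounds of \cref{lem:endpoint-bounds}. The key simplification on the upper range is that at radius $\mu+\delta=1/2$ the feasibility exponent vanishes identically. Indeed,
\[
E(\mu,1/2-\mu)=\tfrac12 h(2\mu)+\tfrac12 h(2\mu)-h(2\mu)=0 .
\]
So on $[3/8,1/2)$ it suffices to exhibit, for each $\mu$, an admissible $\lambda$ with $G(\mu,\lambda)<0$. At $\mu=3/8$ this is the content of \cref{eq:three-quarter-bound}, since $E(3/8,1/8)=0$ there.

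For the upper range I would proceed as follows. First, note that $G(\mu,\cdot)$ is a linear function minus two scaled entropies, so it is convex in $\lambda$. Second, rather than locating its minimizer, pick an explicit family $\lambda(\mu)$ that stays in the interior of the entropy domain. A natural choice is the affine interpolation between $\lambda(3/8)=99/250$ and $\lambda(1/2)=1$. The right endpoint is forced, because as $\mu\to1/2$ the term $(2-4\mu)h(\cdot)$ requires $\lambda\to2\mu$. Third, check the limit: at $\mu=1/2$, $\lambda=1$ the formula gives $G=\log(2/\pi)<0$. Fourth, to show $G(\mu,\lambda(\mu))<0$ on the whole range, I would bound $\frac{d}{d\mu}G(\mu,\lambda(\mu))$ on subintervals using $h'(x)=\log(1/x-1)$. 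On each subinterval I would use a rational grid of evaluation points with the interval-arithmetic logarithm and $\pi$ enclosures already described in the proof of \cref{lem:endpoint-bounds}. The derivative bound times the mesh must stay below the certified negative margin at the grid points.

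Near $\mu=1/2$ the derivative contains the term $\log$ of the argument $(2\mu-\lambda)/(2-4\mu)$, which can blow up. I would therefore treat a final interval $[1/2-\epsilon_0,1/2)$ analytically. Writing $\mu=1/2-\epsilon$ and keeping the ratio $(2\mu-\lambda)/(2-4\mu)$ a fixed interior constant, every term except $\log(2/\pi)$ is $O(\epsilon\log(1/\epsilon))$. Hence $G\le\log(2/\pi)+O(\epsilon\log(1/\epsilon))<0$ for small $\epsilon$, with an explicit $\epsilon_0$.

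For the lower range $[249/800,3/8]$ I would cover the interval by finitely many grid points. At each grid point I would choose a rational $\delta_0$ and $\lambda$, certify $E+G<-\kappa$ by the same rational interval method, and extend to a neighbourhood using explicit Lipschitz bounds for $E$ and $G$ in $\mu$, which hold because all entropy arguments stay away from $0$ and $1$ there. The left endpoint is exactly \cref{eq:base-rate-bound}. Between grid points, \cref{lem:E-monotone} lets me lower $\delta_0$ if needed to keep the margin, since decreasing $\delta$ only decreases $E$.

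The main obstacle is making the derivative and Lipschitz bounds uniform near entropy-domain boundaries. This matters most as $\mu\to1/2$, where the domain of $\lambda$ collapses to the single point $2\mu$. That is why I separate the analytic endpoint expansion from the compact grid verification.
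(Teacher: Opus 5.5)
Your plan is sound, but it takes a genuinely different route from the paper, and a much more computation-heavy one.

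\textbf{What the paper does.} It does not split at $3/8$ and does not use $E(\mu,1/2-\mu)=0$. It sets $\Phi(\mu)=E(\mu,0)+\min_{\lambda}G(\mu,\lambda)$ and uses strict convexity of $G$ in $\lambda$ to get a unique interior minimizer. The envelope theorem, together with the stationarity equation, then gives the closed form $\Phi'(\mu)=\log R$. The paper bounds $R$ by one explicit rational number below $1$, uniformly on all of $[249/800,1/2)$, so $\Phi$ is strictly decreasing. Then $\Phi(249/800)<0$ follows from \cref{eq:base-rate-bound} and \cref{lem:E-monotone}, and a small $\delta_0>0$ comes back by continuity. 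That is a single certified evaluation, with no grid, no use of the second endpoint bound, and no separate analysis near $\mu=1/2$.

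\textbf{What your route buys and costs.} You avoid the minimizer/envelope analysis. On $[3/8,1/2)$ you even certify feasibility at the maximal radius $\mu+\delta_0=1/2$, using the correct identity $E(\mu,1/2-\mu)=0$ and an admissible affine $\lambda(\mu)$. The price is a certified-numerics proof with thin margins. The margin is about $1.4\times10^{-4}$ at $\mu=249/800$. It is about $8\times10^{-4}$ for $G(3/8,99/250)$, which is where your upper branch starts. Your meshes must be correspondingly fine near both points. The margins do improve quickly as you move inward: at the fixed parameters of \cref{eq:base-rate-bound} one has $\partial_\mu(E+G)\approx-2.2$. The paper's inequality $\log R<0$ is the uniform, closed-form version of that observation.

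\textbf{Two details to fix.}
\begin{enumerate}
\item On the lower range the entropy arguments do \emph{not} all stay away from $0$ and $1$. The margin at $\mu=249/800$ forces $\delta_0\lesssim10^{-5}$, so $\mu/(\mu+\delta_0)$ is extremely close to $1$. The Lipschitz bound in $\mu$ still holds, but for a different reason than the one you give: the $\mu$-derivative of $(\mu+\delta_0)h(\mu/(\mu+\delta_0))$ is $O(\delta_0/\mu)$.
\item Along your affine interpolation from $99/250$ to $1$, the ratio $(2\mu-\lambda)/(2-4\mu)$ is identically $0.708$, so it is not the term that blows up near $\mu=1/2$. The divergent derivatives come from $h'(2\mu)$ and from $h'(\lambda/(4\mu-1))$, whose argument tends to $1$. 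Your endpoint estimate $G\le\log(2/\pi)+O(\epsilon\log(1/\epsilon))$ is nonetheless correct.
\end{enumerate}
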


\begin{proof}
We prove the assertion analytically from the first endpoint bound in
\cref{lem:endpoint-bounds}.  Put
\[
  \Phi(\mu):=E(\mu,0)+\min_{\lambda\in I_\mu}G(\mu,\lambda),
  \qquad
  I_\mu:=\bigl(\max\{0,6\mu-2\},\,4\mu-1\bigr).
\]
For $\mu\in[249/800,1/2)$, the interval $I_\mu$ is precisely the interior domain in which both entropy arguments depending on $\lambda$ lie in $(0,1)$.  Write
\[
   a=4\mu-1,
   \quad c=2-4\mu,
   \quad d=2\mu-\lambda,
   \quad e=2-6\mu+\lambda,
   \quad \kappa=\frac2\pi.
\]
A direct differentiation gives
\[
   \frac{\partial^2G}{\partial\lambda^2}
    =\frac1\lambda+\frac1{a-\lambda}
      +\frac1d+\frac1e>0.
\]
Moreover, $\partial G/\partial\lambda$ tends to $-\infty$ at the left endpoint of $I_\mu$ and to $+\infty$ at the right endpoint.  Thus there is a unique interior minimizer $\lambda_*(\mu)$, depending smoothly on $\mu$, and its stationarity equation is
\begin{equation}
   \kappa\lambda_* e=(a-\lambda_*)d.
   \label{eq:lambda-stationarity}
\end{equation}
The envelope theorem, or direct differentiation followed by
\cref{eq:lambda-stationarity}, yields
\begin{equation}
  \Phi'(\mu)=\log R,
  \qquad
  R=\frac{\mu}{1-\mu}
       \left(\frac{1-2\mu}{2\mu}\right)^2
       (1-x)^4\frac{z^2}{(1-z)^6},
  \label{eq:Phi-derivative-ratio}
\end{equation}
where
\[
   x:=\frac{\lambda_*}{a},
   \qquad
   z:=\frac{d}{c}.
\]
We now bound the right side uniformly by a rational number below one.

Set $t=1-2\mu$, so $0<t\le151/400$, and let
\[
   u=1-x,
   \qquad v=1-z,
   \qquad D=\kappa+(1-\kappa)u,
   \qquad w=\frac ut.
\]
Equation~\eqref{eq:lambda-stationarity} gives $v=u/D$.  The identity
$ax+cz=2\mu=1-t$ then becomes
\begin{equation}
   L_{t,\kappa}(w):=
   w(1-2t)+\frac{2tw}{\kappa+(1-\kappa)tw}=1.
   \label{eq:Ltw}
\end{equation}
The left side is strictly increasing in $w$.  Since its denominator at $w=1$ is at most one,
$L_{t,\kappa}(1)\ge1$, so $w\le1$.  The classical bounds
$3<\pi<22/7$ imply $7/11<\kappa<2/3$.  The function
$L_{t,\kappa}(3/4)$ decreases with $\kappa$, and direct rational simplification gives
\[
  1-L_{t,7/11}(3/4)
   =\frac{18t^2-21t+7}{4(3t+7)}>0;
\]
the numerator is positive because its discriminant is $-63$.  Hence
\begin{equation}
   \frac34<w\le1,
   \qquad
   0<u=wt\le\frac{151}{400},
   \qquad
   D\le\frac23+\frac13\frac{151}{400}=\frac{317}{400}.
   \label{eq:wD-bounds}
\end{equation}
Using $v=u/D$, $z=\kappa x/D$, and $u=wt$, the ratio in
\cref{eq:Phi-derivative-ratio} simplifies exactly to
\[
   R=\frac{\kappa^2x^2D^4}{w^2(1-t^2)}.
\]
Therefore
\begin{equation}
   R\le
   \frac{(2/3)^2(317/400)^4}
        {(3/4)^2\bigl(1-(151/400)^2\bigr)}
   =\frac{10098039121}{27782797500}
   <1.
   \label{eq:global-R-bound}
\end{equation}
Thus $\Phi$ is strictly decreasing on the whole interval.

It remains to determine its sign at the left endpoint
$\mu_0=249/800$.  By \cref{eq:base-rate-bound},
\begin{equation}
 E\!\left(\frac{249}{800},10^{-6}\right)
 +G\!\left(\frac{249}{800},\frac{3429}{20000}\right)
 <-\frac1{10000}<0.
\end{equation}
By \cref{lem:E-monotone}, replacing $10^{-6}$ by $0$ only decreases the left side; taking the minimum over $\lambda$ decreases it again.  Hence
$\Phi(249/800)<0$, and monotonicity gives $\Phi(\mu)<0$ for every
$\mu\ge249/800$.  Finally, continuity in $\delta$ supplies a sufficiently small
$\delta_0>0$, also chosen below $1/2-\mu$, for which the strict inequality remains true.
\end{proof}

\begin{theorem}[Algorithmic transfer of strict Sun--Wootters improvements]
\label{thm:qualitative-transfer}
Fix $r=2\mu\in(0,1)$.  Suppose there are $\delta_0>0$ and $\lambda$ satisfying
\begin{equation}
   E(\mu,\delta_0)+G(\mu,\lambda)<0.
   \label{eq:feasible-delta0}
\end{equation}
Then there exists a fixed positive rational $\delta$ for which \cref{thm:main-sampler} applies.  Its expected score is strictly larger than the semicircle benchmark $\SCL_{1/2}(\mu)$ by a constant depending only on $r$ and the feasible parameters.
\end{theorem}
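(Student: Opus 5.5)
The plan is to shrink $\delta_0$ to a small positive rational $\delta$ that keeps the Sun--Wootters condition, lies inside the Johnson radius, and still gives a score strictly above the semicircle value. We then feed this $\delta$ into \cref{thm:main-sampler}.

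\emph{Step 1: the exponent condition survives shrinking $\delta$.} First we may assume $\delta_0\le 1/2-\mu$, since $\delta_0$ ranges in the domain of $E$. For any $\delta\in(0,\delta_0]$, \cref{lem:E-monotone} gives $E(\mu,\delta)\le E(\mu,\delta_0)$. If $\delta_0=1/2-\mu$ is the endpoint, we use continuity of $E$ in $\delta$ at that endpoint together with strict monotonicity on the interior. Hence $E(\mu,\delta)+G(\mu,\lambda)<0$ with the same $\lambda$, and \cref{eq:SW-condition} holds.

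\emph{Step 2: place the radius strictly inside Johnson.} By \cref{lem:johnson-gap} the gap $\gamma:=1-\sqrt{1-r}-\mu=(1-\sqrt{1-r})^2/2$ is positive. Choose a rational $\delta$ with $0<\delta<\min\{\delta_0,\gamma\}$; rationals are dense, so this is possible. Then $\delta\le 1/2-\mu$, because $\gamma\le 1/2-\mu$: indeed $1-\sqrt{1-r}\le 1/2$ iff $r\le 3/4$, and otherwise $\delta<\delta_0\le1/2-\mu$ anyway. Also $\mu+\delta<1-\sqrt{1-r}$, which is \cref{eq:asymptotic-Johnson}. All hypotheses of \cref{thm:main-sampler} now hold: fixed rational $\delta$, the $\lambda$ from Step 1, and the Johnson condition.

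\emph{Step 3: strict score gap.} \Cref{thm:main-sampler} gives $\E[s(x)]\ge\SCL_{1/2}(\mu+\delta)-o(1)-\eta-\beta$. The function $\SCL_{1/2}(t)=1/2+\sqrt{t(1-t)}$ is strictly increasing on $[0,1/2]$, so $\Delta:=\SCL_{1/2}(\mu+\delta)-\SCL_{1/2}(\mu)>0$. Here we use $\mu<\mu+\delta\le 1/2$. This $\Delta$ depends only on $r$ and the chosen feasible parameters. Taking the accuracy parameters $\eta=\beta=\Delta/4$ (fixed constants), the expected score exceeds $\SCL_{1/2}(\mu)$ by at least $\Delta/2-o(1)$, hence by $\Delta/3$ for large $m$.

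The only delicate point is Step 1 at the boundary $\delta_0=1/2-\mu$, and the bookkeeping that $\delta$ stays in the domain where the entropy arguments of $E$ make sense. \Cref{lem:E-monotone} is stated on the open interval, so at the closed endpoint I would argue by continuity. Everything else is a direct combination of \cref{lem:johnson-gap}, \cref{lem:E-monotone}, and \cref{thm:main-sampler}.
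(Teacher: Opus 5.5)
Your proposal is correct and follows the paper's proof essentially step for step. Both pick a rational $\delta<\min\{\delta_0,\,1-\sqrt{1-r}-\mu\}$, use \cref{lem:E-monotone} to keep the exponent negative and \cref{lem:johnson-gap} for the Johnson condition, then take $\eta=\beta$ to be a fixed fraction of $\Delta=\SCL_{1/2}(\mu+\delta)-\SCL_{1/2}(\mu)$; the paper uses $\Delta/8$ and obtains a gap of $\Delta/2$. Your endpoint-continuity step is a bit of care the paper skips, but justify $\delta_0\le 1/2-\mu$ by the standing convention $\delta\in[0,1/2-\mu]$ in \cref{eq:shell}, not by the entropy domain of $E$: that domain extends to $\delta\le 1-2\mu$, and $E$ decreases past $1/2-\mu$.
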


\begin{proof}
By density of the rationals and \cref{lem:johnson-gap}, choose a rational number satisfying
\begin{equation}
   0<\delta<\min\left\{
       \delta_0,
       1-\sqrt{1-r}-\mu
   \right\}.
   \label{eq:shrunk-delta}
\end{equation}
By \cref{lem:E-monotone},
$E(\mu,\delta)+G(\mu,\lambda)<E(\mu,\delta_0)+G(\mu,\lambda)<0$.
By \cref{lem:johnson-gap}, $\mu+\delta<1-\sqrt{1-r}$.  Therefore \cref{thm:main-sampler} applies.  Finally, \cref{eq:scl} is strictly increasing on $[0,1/2)$, so the constant
\[
   \Delta:=\SCL_{1/2}(\mu+\delta)-\SCL_{1/2}(\mu)
\]
is positive.  Invoke \cref{thm:main-sampler} with $\eta=\beta=\Delta/8$.  For all sufficiently large $m$, its $o(1)$ term is at most $\Delta/4$, and \cref{eq:algorithm-score} yields
\[
   \E[s(x)]
   \ge \SCL_{1/2}(\mu+\delta)-\frac{\Delta}{2}
   =\SCL_{1/2}(\mu)+\frac{\Delta}{2}.
\]
This is the claimed constant strict improvement.
\end{proof}

The left endpoint in \cref{lem:global-feasibility} is $2\mu=249/400=0.6225$.  Combining that lemma with \cref{thm:qualitative-transfer} proves the algorithmic separation for every fixed rate from $0.6225$ onward, matching the threshold reported by Sun and Wootters \cite{SunWootters2026}.

\begin{corollary}[Strict semicircle improvement from rate $0.6225$]
\label{cor:rate-06225}
Fix a rate
\begin{equation}
   0.6225\le r<1,
\end{equation}
and a sequence of prime-field Reed--Solomon OPI instances with $m\to\infty$, $n/m\to r$, $m\le p$ prime, distinct evaluation points, common density $\rho_m=1/2+o(1)$, and coherent membership-oracle access.  There is a constant $\gamma_r>0$ and a uniform bounded-error polynomial-time quantum algorithm whose expected satisfaction ratio is at least
\begin{equation}
   \SCL_{1/2}(r/2)+\gamma_r
\end{equation}
for all sufficiently large $m$.  Moreover, for every fixed $\varepsilon>0$ and $\beta\in(0,1)$, polynomially many samples output an $x$ satisfying
\begin{equation}
   s(x)\ge \SCL_{1/2}(r/2)+\gamma_r-\varepsilon
\end{equation}
with probability at least $1-\beta$, for all sufficiently large $m$.
\end{corollary}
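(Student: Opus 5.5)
The corollary follows by chaining \cref{lem:global-feasibility}, \cref{thm:qualitative-transfer}, and \cref{cor:high-probability-score}.

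\emph{Step 1: feasible parameters.} Put $\mu=r/2$. The hypothesis $0.6225\le r<1$ is exactly $249/800\le\mu<1/2$, so \cref{lem:global-feasibility} supplies $\delta_0\in(0,1/2-\mu]$ and $\lambda$ in the entropy domain with $E(\mu,\delta_0)+G(\mu,\lambda)<0$. These choices depend only on $r$.

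\emph{Step 2: transfer to the sampler.} Feeding this pair into \cref{thm:qualitative-transfer} yields a fixed positive rational $\delta$ satisfying three conditions:
\begin{itemize}
\item $\delta<\delta_0$;
\item $\mu+\delta<1-\sqrt{1-r}$;
\item the Sun--Wootters condition \cref{eq:SW-condition} still holds, by \cref{lem:E-monotone}.
\end{itemize}
\Cref{thm:main-sampler} therefore applies to the given family. Its hypotheses are $n/m\to r$, $\rho_m=1/2+o(1)$, prime $p\ge m$, distinct points, and membership-oracle access; these are exactly those of the corollary, together with the rates-of-convergence absorption of \cref{lem:SW-stability}.

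\emph{Step 3: the expectation bound.} Set $\Delta=\SCL_{1/2}(\mu+\delta)-\SCL_{1/2}(\mu)>0$ and $\gamma_r:=\Delta/2$. Run \cref{thm:main-sampler} with $\eta=\beta=\Delta/8$, replacing $\bot$ by a default vector. The proof of \cref{thm:qualitative-transfer} then gives $\E[s(x)]\ge\SCL_{1/2}(r/2)+\gamma_r$ for large $m$. There is one subtlety: the semicircle benchmark is at $\mu=r/2$, while the family has $n/(2m)=\mu+o(1)$. Since $\mu$ is the limit and $\SCL_{1/2}$ is defined at the limit, no correction is needed. The running time is polynomial because $\eta$ and $\beta$ are constants.

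\emph{Step 4: the high-probability statement.} Apply \cref{cor:high-probability-score} with accuracy $\varepsilon'=\min\{\varepsilon,\Delta\}/2$. For large $m$ its $o(1)$ term is at most $\varepsilon'$. This gives $s(x)\ge\SCL_{1/2}(\mu+\delta)-2\varepsilon'\ge\SCL_{1/2}(\mu)+\Delta-\varepsilon\ge\SCL_{1/2}(r/2)+\gamma_r-\varepsilon$ with probability at least $1-\beta$, using polynomially many samples and score evaluations.

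Most of this is bookkeeping, and I expect Step 4's constants to be the main place for error. The constants $\gamma_r$, $\delta$, and the thresholds on $m$ must depend only on $r$, not on $\varepsilon$, so they have to be fixed before $\varepsilon$ is chosen.
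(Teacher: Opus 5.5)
Your proposal is correct and follows the paper's proof essentially step for step: \cref{lem:global-feasibility} at $\mu=r/2$, then the shrunken rational $\delta$ of \cref{thm:qualitative-transfer} run with $\eta=\beta=\Delta/8$, then \cref{cor:high-probability-score} with accuracy parameter below $\varepsilon$. The one difference is that you take $\gamma_r=\Delta/2$ where the paper takes $\Delta_r/4$. This is equally valid, since the transfer proof yields exactly $\SCL_{1/2}(\mu)+\Delta/2$ and your Step 4 chain $\SCL_{1/2}(\mu)+\Delta-\varepsilon\ge\SCL_{1/2}(\mu)+\gamma_r-\varepsilon$ still holds. Your closing remark is slightly too strict: the threshold on $m$ in Step 4 may depend on $\varepsilon$, and in your own argument it does; only $\gamma_r$ and $\delta$ must be fixed independently of $\varepsilon$.
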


\begin{proof}
Put $\mu=r/2$.  Then $\mu\ge249/800$, so \cref{lem:global-feasibility} supplies parameters satisfying \cref{eq:feasible-delta0}.  Let $\delta>0$ be the parameter selected in \cref{thm:qualitative-transfer}, and put
\[
   \Delta_r:=\SCL_{1/2}(\mu+\delta)-\SCL_{1/2}(\mu)>0,
   \qquad \gamma_r:=\frac{\Delta_r}{4}.
\]
The proof of \cref{thm:qualitative-transfer} gives expected score at least $\SCL_{1/2}(\mu)+\Delta_r/2$ for all sufficiently large $m$, which implies the first claim.  Apply \cref{cor:high-probability-score} to the same $\delta$, with its accuracy parameter chosen smaller than the stated $\varepsilon$.  Since its target is $\SCL_{1/2}(\mu)+\Delta_r$ and its $o(1)$ term vanishes, the resulting score is at least $\SCL_{1/2}(\mu)+\gamma_r-\varepsilon$ for all sufficiently large $m$; repetition gives failure probability at most $\beta$.
\end{proof}

\subsection{Asymptotically Perfect Solutions at Three-Quarter Rate}

At the limiting rate $3/4$, the asymptotic Johnson radius is exactly $1/2$.  To handle arbitrary floor effects, and even families whose finite rates approach $3/4$ from below, we choose the largest integer cutoff that is simultaneously at most $m/2$ and strictly inside the finite decoder radius.

\begin{lemma}[Finite Johnson endpoint cutoff]
\label{lem:three-quarter-slack}
Let $1\le n'<m$ and define
\begin{equation}
   \ell_J(m,n'):=
   \min\left\{
      \left\lfloor\frac m2\right\rfloor,
      \left\lceil m-\sqrt{(m-n'-1)m}\right\rceil-1
   \right\}.
   \label{eq:endpoint-cutoff}
\end{equation}
Then
\begin{equation}
   m-\ell_J(m,n')>\sqrt{(m-n'-1)m}.
   \label{eq:endpoint-finite-Johnson}
\end{equation}
If $n'/m\to3/4$, then $\ell_J(m,n')/m\to1/2$.
\end{lemma}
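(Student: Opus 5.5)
The plan has two parts: first the finite inequality, then the limit.

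\emph{Finite inequality.} Put $J:=\sqrt{(m-n'-1)m}$, which is a nonnegative real number since $n'<m$. For any real $J$, the ceiling satisfies $\lceil m-J\rceil-1<m-J$, because $\lceil t\rceil<t+1$ for every real $t$. Since $\ell_J(m,n')$ is a minimum, it is at most its second entry. Hence $\ell_J(m,n')\le\lceil m-J\rceil-1<m-J$, and rearranging gives \cref{eq:endpoint-finite-Johnson}. No case split is needed: the $\lfloor m/2\rfloor$ cap can only make $\ell_J$ smaller, which strengthens the inequality. I will also check that $\ell_J\ge0$ for the regime of interest, though the inequality itself does not require it.

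\emph{Limit.} Divide by $m$ and bound each entry of the minimum separately. The first entry satisfies $\lfloor m/2\rfloor/m\to1/2$. For the second entry, $\bigl(\lceil m-J\rceil-1\bigr)/m$ differs from $1-J/m$ by at most $1/m$. Moreover,
\[
   \frac Jm=\sqrt{1-\frac{n'+1}m}\to\sqrt{1-\tfrac34}=\tfrac12,
\]
because $n'/m\to3/4$ and $1/m\to0$, and the square root is continuous on $[0,1]$. So the second entry also tends to $1-1/2=1/2$. A minimum of two sequences converging to the same limit converges to that limit, so $\ell_J(m,n')/m\to1/2$.

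The main obstacle is only bookkeeping: the ceiling and floor rounding must be handled so that the strict inequality survives. The identity $\lceil t\rceil-1<t$ takes care of this directly. No earlier result of the paper is needed beyond these elementary facts about the ceiling function and the continuity of the square root.
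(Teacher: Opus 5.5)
Your proposal is correct and follows essentially the same route as the paper. The strict inequality comes from $\lceil a\rceil-1<a$ applied to the second entry of the minimum, and the limit holds because both entries of the minimum have ratio tending to $1/2$.
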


\begin{proof}
For every real $a$, $\lceil a\rceil-1<a$.  Thus the second entry in the minimum in \cref{eq:endpoint-cutoff} is strictly smaller than
$m-\sqrt{(m-n'-1)m}$, proving \cref{eq:endpoint-finite-Johnson}.  If $n'/m\to3/4$, then
\[
   \frac1m\left(m-\sqrt{(m-n'-1)m}\right)
   =1-\sqrt{1-\frac{n'}m-\frac1m}\longrightarrow\frac12.
\]
Both entries in the minimum therefore have ratio tending to $1/2$.
\end{proof}

The second endpoint bound in \cref{lem:endpoint-bounds} applies at
\begin{equation}
   \mu=\frac38,
   \qquad \delta=\frac18,
   \qquad \lambda=\frac{99}{250},
   \label{eq:three-quarter-point}
\end{equation}
and shows that the Sun--Wootters exponent is strictly negative.  Hence the
strict condition persists in a neighborhood of this point.

\begin{corollary}[Algorithmic asymptotic perfection]
\label{cor:perfect-three-quarter}
Consider any sequence of prime-field Reed--Solomon OPI instances with $m\to\infty$, $m\le p$ prime, distinct evaluation points, common density $\rho_m=1/2+o(1)$, and coherent membership-oracle access.  If $n/m\to r\ge3/4$, there is a uniform polynomial-time quantum algorithm that outputs $x$ with
\begin{equation}
   s(x)=1-o(1)
\end{equation}
with probability $1-o(1)$.
\end{corollary}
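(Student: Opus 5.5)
The plan is to reduce every rate $r\ge3/4$ to an effective rate tending to exactly $3/4$, and then rerun the proof of \cref{thm:main-sampler} with the endpoint cutoff of \cref{lem:three-quarter-slack} in place of \cref{eq:computable-shell-family}.

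\textbf{Reduction to effective rate $3/4$.} Put $n':=\min\{n,\lfloor 3m/4\rfloor\}$, so that $n'/m\to3/4$ and $1\le n'<m$ for large $m$. Every polynomial of degree less than $n'$ also has degree less than $n$. Hence a solution of the OPI instance with degree bound $n'$, on the same points and sets, padded with zero coefficients, is a solution of the original instance with the same satisfaction ratio $s$. So it suffices to treat the instance $(m,n')$. Its Vandermonde matrix consists of the first $n'$ columns of $B$, and its dual code is still generalized Reed--Solomon by \cref{lem:dual-GRS}.

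\textbf{Sampler at the Johnson endpoint.} Take $\mu_m=n'/(2m)\to3/8$ and largest active weight $\ell=\ell_J(m,n')$. By \cref{lem:three-quarter-slack}, $\ell\le m/2$, $\ell/m\to1/2$, and the finite Johnson inequality \cref{eq:finite-Johnson} holds for $(m,n',\ell)$. The limiting parameters are then $(\mu,\delta)=(3/8,1/8)$. By \cref{eq:three-quarter-bound} with $\lambda=99/250$, which lies in the interior of the entropy domain, the Sun--Wootters condition holds there with a strict margin. \Cref{lem:SW-stability} therefore yields \cref{eq:SW-denominator} and \cref{eq:SW-score} along this sequence, with $\SCL_{1/2}(1/2)=1$.

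Next I would go through the steps of the proof of \cref{thm:main-sampler}, which do not use the fixed rational $\delta$ or constant Johnson slack, except to get $L=O(1)$:
\begin{itemize}[leftmargin=2em]
  \item \Cref{cor:dual-list-decoding} gives a complete deterministic decoder in polynomial time.
  \item \Cref{lem:explicit-list-bound} still gives $L<2m^2$, since $a^2-m(k-1)$ is a positive integer. Only the constant list size is lost, which costs a polynomial factor.
  \item \Cref{prop:shell-prep,prop:membership-to-fourier} prepare the labeled state.
  \item \Cref{thm:exact-fiber-sum} gives exactness of the successful branch.
  \item \Cref{lem:norm-transfer} gives $p_{\mathrm{fib}}\ge1/(2L)$.
  \item \Cref{cor:finite-precision} handles finite precision.
\end{itemize}
The cutoff $\ell_J$ is computable from the classical data, using integer square roots, so the circuit is uniform. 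I would choose $\eta=\beta=1/m$; this is allowed because the running time is polylogarithmic in $1/\eta$ and $1/\beta$. The result is a polynomial-time sampler with $\E[s(x)]\ge1-\epsilon_m$ for some $\epsilon_m\to0$.

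\textbf{From expectation to high probability.} Since $1-s(x)\ge0$, Markov's inequality gives
\[
\Prb\bigl[s(x)<1-\sqrt{\epsilon_m}\bigr]\le\sqrt{\epsilon_m}=o(1).
\]
So a single sample already has $s(x)=1-o(1)$ with probability $1-o(1)$. Optionally, one can take a few samples and keep the best, with classical scoring as in \cref{cor:high-probability-score}.

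\textbf{Main obstacle.} I expect the delicate step to be justifying \cref{lem:SW-stability} at the boundary point $\mu+\delta=1/2$. Here the cap $\ell\le m/2$ is active and the finite radius approaches the limit from below at a non-fixed rate. I would check that every entropy argument in $E(3/8,1/8)$ is interior, namely $\mu/(\mu+\delta)=3/4$ and $\mu/(1-\mu-\delta)=3/4$, so that the $o(m)$ perturbations are absorbed by the $-1/2000$ margin exactly as in that lemma's proof. I would also check that the Sun--Wootters estimate only needs $\ell\le m/2$, which holds at every finite length, rather than a fixed gap below $1/2$.
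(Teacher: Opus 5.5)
Your proposal is correct and follows the paper's proof essentially step for step: truncate to $n'=\min\{n,\lfloor 3m/4\rfloor\}$, take the endpoint cutoff $\ell_J(m,n')$, get the score bound from the strict margin at $(3/8,1/8,99/250)$ via \cref{lem:SW-stability}, rerun the sampler construction with inverse-polynomial error and failure probability, and finish with Markov's inequality. You also spell out two points the paper leaves implicit: at the Johnson endpoint the list length is only bounded by $L<2m^2$ rather than $O(1)$, and $\ell_J$ is computable with an integer square root.
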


\begin{proof}
Set
\[
   n'_m:=\min\left\{n,\left\lfloor\frac{3m}{4}\right\rfloor\right\}.
\]
For all sufficiently large $m$, this integer satisfies $1\le n'_m<m$.  Since $n/m\to r\ge3/4$, we have $n'_m/m\to3/4$.  Restrict the coefficient space to the first $n'_m$ monomials; every polynomial produced for this subproblem is valid for the original degree bound because $n'_m\le n$.

Use the cutoff $\ell_m=\ell_J(m,n'_m)$ from \cref{lem:three-quarter-slack}, and set
\[
   \mu_m=\frac{n'_m}{2m},
   \qquad
   \delta_m=\frac{\ell_m}{m}-\mu_m.
\]
Then $(\mu_m,\delta_m)\to(3/8,1/8)$, while the finite list-decoding inequality holds for every $m$.  The strict bound in \cref{eq:three-quarter-bound}, continuity of $E+G$, and \cref{lem:SW-stability} imply
\[
   \E_{x\sim P_{u,m}}s(x)
      \ge \SCL_{1/2}\!\left(\frac{\ell_m}{m}\right)-o(1)
      =1-o(1).
\]
Here $P_{u,m}$ is the Sun--Wootters distribution defined by the cutoff $\ell_m$.  The coherent list-index construction and the deterministic list decoder therefore sample this distribution in polynomial time.  Choose circuit error and restart failure probability to be $o(1)$, for example $m^{-2}$.  If $\E[1-s(x)]\le\varepsilon_m=o(1)$, Markov's inequality gives
\[
   \Prb[s(x)<1-\sqrt{\varepsilon_m}]
      \le\sqrt{\varepsilon_m},
\]
which proves the claim.
\end{proof}

\begin{remark}[Sampling beyond the Johnson radius]
Horinaga and Yamakawa give a balanced existential saturation guarantee for $R>0.7158$ and an algorithm finding satisfaction $1$ at every fixed rate $R>3/4$ \cite{HorinagaYamakawa2026}.  For the Sun--Wootters distribution, the target radius $1/2$ in the range $0.7496\le r<0.75$ lies beyond the dual Reed--Solomon Johnson radius.  The sampler of \cref{thm:main-sampler} gives the strict semicircle improvement of \cref{cor:rate-06225} throughout this interval.  Exact sampling at radius $1/2$ would require coherent weighted summation beyond efficient complete list decoding.
\end{remark}

\section{Conclusion}

Coherent list-index projection converts a complete classical list decoder into an exact sum over errors sharing a syndrome.  Applied to the dual generalized Reed--Solomon code of OPI, it prepares the Sun--Wootters distribution for cutoffs satisfying the Johnson-radius bound.  The Sun--Wootters denominator estimate and the monotonicity argument yield a polynomial-time improvement from balanced rate $0.6225$ and asymptotically perfect satisfaction at rate $3/4$.

Concurrent work of Horinaga and Yamakawa gives a Regev-like algorithm over prime-power fields $q=p^e$ satisfying $p=\omega(1)$ and $\log q\le\poly(m)$, and extends to MDS MaxLINSAT instances whose dual code admits an efficient list decoder.  In balanced OPI, it finds solutions of satisfaction $1$ for every fixed rate above $3/4$.
\bibliographystyle{alpha}
\bibliography{references}

\end{document}